\def\1{\bm{1}}
\DeclareMathAlphabet{\mathsfit}{\encodingdefault}{\sfdefault}{m}{sl}
\SetMathAlphabet{\mathsfit}{bold}{\encodingdefault}{\sfdefault}{bx}{n}
\newcommand{\E}{\mathbb{E}}
\DeclareMathOperator*{\argmax}{argmax}
\def\reals{{\mathbb{R}}}
\def\cD{{\mathcal{D}}}
\def\cP{{\mathcal{P}}}
\newcommand{\Enc}{\texttt{Enc}\xspace}
\newcommand{\cls}{\texttt{[CLS]}\xspace}
\newcommand{\ndcg}{\mathrm{NDCG}}
\newcommand{\dcg}{\mathrm{DCG}\xspace}
\newcommand{\eqdef}{\mathrel{\overset{\triangle}{=}}}
\newcommand{\cmark}{\ding{51}} %
\newcommand{\xmark}{\ding{55}} %
\newcommand{\commentout}[1]{}
\newtheorem{theorem}{Theorem}
\theoremstyle{plain}
\newtheorem{proposition}{Proposition}
\newcommand{\ouralgo}{\textsc{RewardRank}\xspace}
\newtcolorbox{llm_input}{
   colback=gray!5,
   colframe=gray!30,
   fonttitle=\bfseries,
   title={LLM Prompt (Probability Estimation Task)},
   arc=2mm,
   boxrule=0.3pt,
   left=1mm,
   right=1mm,
   top=1mm,
   bottom=1mm,
   enhanced,
   breakable,
   coltitle=black
 }
 \newtcolorbox{llm_response}{
   colback=gray!5,
   colframe=gray!30,
   fonttitle=\bfseries,
   title={LLM's response to the initial item list},
   arc=2mm,
   boxrule=0.3pt,
   left=1mm,
   right=1mm,
   top=1mm,
   bottom=1mm,
   enhanced,
   breakable,
   coltitle=black
 }
  \newtcolorbox{llm_response_ranker}{
   colback=gray!5,
   colframe=gray!30,
   fonttitle=\bfseries,
   title={LLM's response to the ranked list provided by \ouralgo},
   arc=2mm,
   boxrule=0.3pt,
   left=1mm,
   right=1mm,
   top=1mm,
   bottom=1mm,
   enhanced,
   breakable,
   coltitle=black
 }
\title{RewardRank: Optimizing True Learning-to-Rank Utility}
\author{%
  Gaurav Bhatt\thanks{\textbf{Email}: gauravbhatt.cs.iitr@gmail.com. Part of the research work was done during internship at Amazon.} $^{1,3}$ \hspace{3pt} \textbf{Kiran Koshy Thekumparampil}$^{2}$ \hspace{3pt}
  Tanmay Gangwani$^{2}$ \hspace{3pt}
  Tesi Xiao$^{2}$ \hspace{3pt} \\
  \textbf{Leonid Sigal}$^{13}$ \hspace{3pt}
   \\[1em]
    $^1$The University of British Columbia, \hspace{3pt} 
    $^2$Amazon\hspace{3pt}
    $^3$ The Vector Institute, Canada
}
\begin{document}

\maketitle

\begin{abstract}

Traditional ranking systems optimize offline proxy objectives that rely on oversimplified assumptions about user behavior, often neglecting factors such as position bias and item diversity. Consequently, these models fail to improve true counterfactual utilities such as such as click-through rate or purchase probability, when evaluated in online A/B tests.
We introduce RewardRank, a data-driven learning-to-rank (LTR) framework for counterfactual utility maximization. RewardRank first learns a reward model that predicts the utility of any ranking directly from logged user interactions, and then trains a ranker to maximize this reward using a differentiable soft permutation operator.
To enable rigorous and reproducible evaluation, we further propose two benchmark suites: (i) Parametric Oracle Evaluation (PO-Eval), which employs an open-source click model as a counterfactual oracle on the Baidu-ULTR dataset, and (ii) LLM-as-User Evaluation (LAU-Eval), which simulates realistic user behavior via large language models on the Amazon-KDD-Cup dataset. RewardRank achieves the highest counterfactual utility across both benchmarks and demonstrates that optimizing classical metrics such as NDCG is sub-optimal for maximizing true user utility. Finally, using real user feedback from the Baidu-ULTR dataset, RewardRank establishes a new state of the art in offline relevance performance. Overall, our results show that learning-to-rank can be reformulated as direct optimization of counterfactual utility, achieved in a purely data-driven manner without relying on explicit modeling assumptions such as position bias.
Our code is available at:  \url{https://github.com/GauravBh1010tt/RewardRank}
\end{abstract}

\section{Introduction}
\label{sec:intro}

The goal of any ranking system is to model human decision-making in a way that maximizes user engagement and utility. However, real-world user behavior is shaped by subtle, context-dependent cognitive biases that traditional ranking losses fail to capture. Engagement often drops when users are presented with redundant or overly similar items, whereas introducing diversity or strategically positioning items can significantly enhance interest. For example, the decoy effect—where the presence of a less-attractive item increases preference for a similar alternative—has been observed in search interactions and shown to meaningfully influence user choices~\citep{wang2025decoy}. Other well-documented biases include position bias~\citep{chen2024positionbias, hager2024unbiased_sigir_ultr, zou2022large_baidu_ultr}, brand bias~\citep{li2025brandbias}, and similarity aversion~\citep{tversky2004similarityaversion}. In online advertising, the goal is often to maximize the probability that a user clicks on any item in the list, rather than just the top-ranked one. If data shows that users tend to click on the second position, it may be optimal to place the most engaging ad there to improve overall performance. Likewise, in recommendation scenarios, users may prefer a diverse mix of product styles or brands over a cluster of nearly identical, albeit highly relevant, items. Traditional ranking losses, which emphasize relevance at individual positions (typically the top), are ill-suited for modeling such list-level behaviors (Figure \ref{fig:illus}). They overlook the fact that user utility depends not just on which items are shown, but how they are arranged, highlighting the limitations of handcrafted objectives in capturing the interactive and comparative nature of real user decision-making.

A natural way to model user behavior is by learning preferences over full permutations of items within a query group (i.e., a query and its associated items). The ideal objective is to identify and rank those permutations that are most likely to drive user engagement, which can be formulated as a likelihood maximization problem: maximizing the probability of observing high-engagement permutations while minimizing that of unengaged ones. However, the combinatorial explosion of the permutation space quickly renders this approach intractable; for instance, ranking 10 items results in $10!$ (over 3.6 million) possible arrangements. To address this, recent approaches adopt a utility-based framework~\citep{feng2021revisit_cltr, shi2023pier_cltr, urcc, ren2024nonauto_cltr, wang2025nlgr_cltr}, where a utility model is trained to score permutations based on user preferences, and a ranker is subsequently optimized to generate item orders that maximize the predicted utility. While this framework reduces the combinatorial burden,  it introduces two key challenges. First is the classic exploration–exploitation dilemma: the ranker must leverage known high-utility arrangements while also exploring novel permutations that may yield higher engagement. \begin{wrapfigure}{r}{0.45\textwidth}
    \centering
    \includegraphics[scale=0.06]{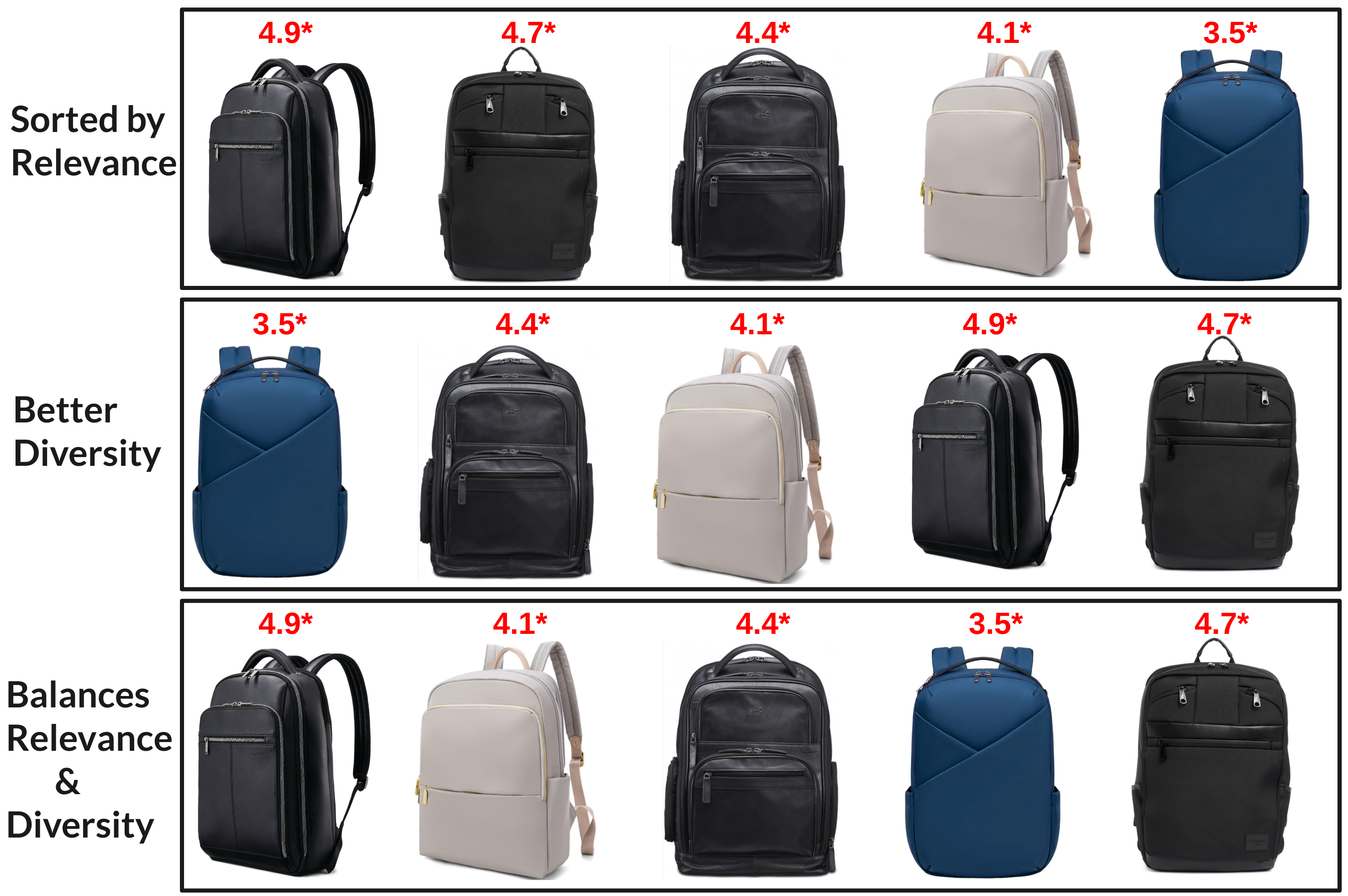}
    \caption{\textbf{Counterfactual ranking with true learning-to-rank utility}. Three arrangements for the query "\textit{laptop bag}", with item relevance/rating scores (0–5*). The top row ranks purely by relevance but suffers from similarity aversion due to identical color and style, lowering engagement. The middle row improves diversity but surfaces low-relevance items early, which may deter clicks. The bottom row balances diversity and relevance, placing distinct yet relevant items in top positions, leading to higher predicted utility and user engagement. (Figures are generated by GPT-4o)}
    \label{fig:illus}
\end{wrapfigure}
Second is utility model misspecification (akin to reward misspecification \cite{clark2016faulty, coste2023rewardhacking}): if the learned reward model fails to accurately reflect true user preferences, the ranker may be misled, resulting in poor exploration and degraded overall performance.

In operational ranking systems, user interactions are logged for only a small fraction of the total permutation space. For example, in Figure \ref{fig:illus}, only 3 out of the 120 possible arrangements of 5 items for the query "laptop bag" are observed, constituting the factual space. These observed interactions define the \textit{factual/observed} space, whereas the vast majority of unexposed, yet potentially high-utility, permutations form the \textit{counterfactual/unobserved} space. The optimal arrangement that maximizes user engagement may exist anywhere within the full permutation space of 120 arrangements. One of the major challenges in counterfactual ranking lies in reliably evaluating unobserved permutations. Even if the full permutation space is modeled, evaluating ranking strategies under counterfactual settings remains challenging due to the lack of explicit supervision~\citep{agarwal2019general_cltr, gupta2024proximal_cltr, gupta2024practical_cltr, buchholz2024counterfactual_eval}. For instance, in Figure~\ref{fig:illus}, 117 out of 120 possible arrangements remain unobserved, making their evaluation inherently counterfactual. Existing approaches, such as offline A/B testing, inverse propensity scoring, or other debiasing techniques, are often costly, statistically unstable, or difficult to scale, making counterfactual evaluation a central bottleneck in listwise utility optimization.

To address these challenges, we propose \ouralgo, a counterfactual utility maximization framework that models user behavior over full item permutations. Rather than scoring items in isolation, we learn a permutation-aware utility function that captures user preferences at the list level. To enable differentiable optimization over permutations, we employ the \textit{SoftSort} operator~\citep{prillo2020softsort} to construct soft item embeddings, allowing end-to-end training of the ranking model with respect to utility gradients. To mitigate the effects of reward model misspecification---where the learned utility may diverge from actual user preferences---we introduce a correction term in the ranker's training objective that improves robustness during optimization. For evaluation, we present two scalable, fully automated protocols that assess counterfactual performance without requiring human labels. \textit{Parametric Oracle Evaluation (PO-Eval)} uses a pretrained, position-aware oracle to provide soft supervision and serve as a proxy for user behavior. \textit{LLM-As-User Evaluation (LAU-Eval)} leverages large language models to simulate user preferences and assess ranking quality across unobserved permutations. Together, these methods enable efficient benchmarking of counterfactual ranking strategies and help align learned rankings with actual or simulated user utility.

Our key contributions can be summarized as:
\begin{itemize}
    \item We introduce \ouralgo, a framework for counterfactual utility maximization that learns a permutation-aware reward model, capturing human list-level preferences and behavioral biases without any explicit modeling assumptions such as position bias.

    \item We enable end-to-end ranking optimization using differentiable soft permutation operators, and incorporating a per-item auxiliary loss along with a misspecified reward correction term to aid counterfactual space exploration.

    \item We propose two large-scale automated evaluation protocols: \textit{PO-Eval} (parametric oracle) and \textit{LAU-Eval} (LLM-as-user), and construct reproducible testbeds for scalable counterfactual ranking evaluation. Experiments on these testbeds reveal that optimizing standard offline ranking metrics such as NDCG do not reliably maximize true user utility.
    \item In both proposed counterfactual testbeds, \ouralgo consistently achieves the highest learning-to-rank (LTR) utility compared to existing and widely adopted ranking methods. When trained on real click signals from an industry-scale dataset, \ouralgo further establishes a new state-of-the-art in relevance performance.
\end{itemize}

\section{Related Work}

\noindent\textbf{Traditional ranking methods.} Traditional learning-to-rank (LTR) methods are typically categorized into three classes: point-wise, pair-wise, and list-wise approaches. Point-wise methods treat ranking as a regression or classification problem by independently assigning relevance scores to each item \citep{burges2005learning, burges2005ranknet}. While computationally efficient, they neglect interactions among items in the ranked list. Pair-wise approaches, including RankSVM \citep{joachims2002optimizing}, RankBoost \citep{freund2003efficient}, and LambdaMART \citep{burges2006from, wu2010adapting}, aim to learn relative preferences between item pairs, improving over point-wise methods but still failing to capture full list-level dependencies. In contrast, list-wise methods optimize objectives over the entire ranking, such as NDCG \citep{cao2007learning_listnet,xia2008listwise}, offering better alignment with evaluation metrics. 

Recent large-scale datasets such as Baidu-ULTR~\citep{zou2022large_baidu_ultr} have enabled realistic benchmarking of ranking algorithms under user-interaction-driven settings, facilitating systematic studies on position bias, distribution shift, and counterfactual evaluation in LTR~\citep{hager2024unbiased_sigir_ultr}. Building on these advances, modern approaches have expanded beyond purely supervised objectives toward \emph{data-driven} and \emph{representation-rich} formulations. Pretraining-based LTR models leverage large language or multimodal corpora to learn transferable ranking priors~\citep{hou2024pretrained}, while latent cross-encoding methods~\citep{luo2022matrank} and set-aware transformers \cite{qin2021neural_dasalc} jointly embed queries and items to capture fine-grained contextual dependencies.

\noindent\textbf{Counterfactual Learning-to-Rank.}  
Prior work in counterfactual learning-to-rank (CLTR) primarily addresses position bias in implicit feedback using methods such as inverse propensity scoring (IPS)~\citep{joachims2017unbiased} and doubly robust estimation~\citep{oosterhuis2023doubly_cltr}. Extensions include modeling trust bias~\citep{agarwal2019general_cltr} and jointly correcting for both position and trust biases~\citep{vardasbi2020inverse_cltr}. Recent approaches explore policy optimization via proximal updates~\citep{gupta2024proximal_cltr} and extend this to trust-aware CLTR through proximal ranking objectives~\citep{gupta2024practical_cltr}. %
While effective, these methods often focus narrowly on position bias or make strong assumptions, underscoring the need for broader utility-driven ranking frameworks, as pursued in this work.

\noindent\textbf{Utility-oriented counterfactual reranking}. 
Reranking methods enhance an initial ranked list by applying a secondary model to better optimize downstream objectives such as user utility, fairness, or diversity \citep{urcc, wang2025nlgr_cltr}. Recent work in counterfactual ranking predominantly follows a two-stage framework consisting of a generator and an evaluator \citep{urcc, shi2023pier_cltr, ren2024nonauto_cltr, wang2025nlgr_cltr}. For example, URCC \citep{urcc} trains a set-aware utility model and employs a context-sensitive pairwise LambdaLoss to guide the ranker. NLGR \citep{wang2025nlgr_cltr} leverages neighboring lists within a generator-evaluator setup for utility optimization. PRS \citep{feng2021revisit_cltr} adopts beam search to generate candidate permutations and evaluates them using a permutation-wise scoring model, while PIER \citep{shi2023pier_cltr} uses SimHash to select top-K candidates from the full permutation space efficiently.

Reranking approaches rely on a strong base ranker trained on logged data and typically explore counterfactuals around its initial permutations \cite{urcc,wang2025nlgr_cltr}, which constrains exploration and limits discovery of globally optimal rankings. Importantly, these methods do not learn an explicit reward model; instead, they assume a predefined metric such as NDCG to serve as the counterfactual reward \citep{joachims2017unbiased, agarwal2019general_cltr}. While effective in certain settings, this reliance on a fixed evaluation metric can hinder adaptability to more general or task-specific reward signals.

\noindent\textbf{Differential approximation to ranking}.
A key challenge in learning-to-rank is the mismatch between evaluation metrics (e.g., NDCG, MAP) and surrogate loss functions amenable to gradient-based optimization, due to the non-differentiable nature of sorting operations. To address this, prior work has either proposed smooth approximations to the rank function (e.g., ApproxNDCG \citep{qin2010general}) or introduced differentiable approximations to argsort using soft permutation matrices \citep{grover2019stochastic_neuralsort, prillo2020softsort}; for instance, PiRank \citep{swezey2021pirank} and NeuralNDCG \citep{pobrotyn2021neuralndcg}  utilize NeuralSort as a temperature-controlled surrogate. Another line of work leverages the Plackett–Luce distribution to model ranking policies in a differentiable manner \citep{oosterhuis2021computationally_pl}. Methods like PG-RANK \citep{pgrank} use policy gradients to optimize the expected reward over the Plackett–Luce distribution based on REINFORCE, while ListNet \citep{cao2007learning_listnet} and ListMLE \citep{xia2008listwise} employ the Plackett–Luce framework to derive smooth list-wise objectives.

\section{Learning-To-Rank Problem: Utility Maximization vs Sorting}

A data sample of an LTR problem is a \emph{query group} (QG) consisting of a query, $q$, and a set of $L$ items, $\{x_\ell\}_{\ell=1}^L$, where $L$ may vary. The \emph{query} may represent, for example, a search string, a user profile, or other contextual information like device type and page layout. The \emph{items} are candidate entities like webpages, songs, or products retrieved by an upstream system. We assume that the QGs are drawn i.i.d.~from a distribution $\cP$, i.e.~$(q, \{x_\ell\}) \sim \cP$. When a user is presented with a ranking/arrangement (permutation), $\pi: [L] \to [L]$, of the items of a QG, i.e.~($x_{\pi(1)}, \ldots, x_{\pi(L)})$, they interact with the ranked items, yielding a stochastic utility $U(q, \{x_\ell\}, \pi) \in \reals$, which is a hidden function of the QG and the ranking. In typical internet systems, the utility can represent outcomes such as whether a user clicks or purchases any item, or continuous measures such as the total minutes of media consumed. Our objective is to learn a ranking policy, $f$, mapping the QGs to permutations, that maximizes the expected utility return, i.e.
\begin{align}
f^* = \argmax_{f} \E_{(q, \{x_\ell\}) \sim \cP}[U(q, \{x_\ell\}, \pi = f(q, \{x_\ell\})] \label{eq:cltr_utility}
\end{align}
Based on the choice of the utility, this objective corresponds to business metrics like click-through rate, units sold, or streamed minutes.
The main challenge here is that the hidden stochastic utility function $U$ is not directly observable. Instead we are given a training dataset, $\cD$, consisting of $N$ QGs (indexed by $i$) and their observed utility $\{u_i\}$ under some logged rankings $\{\pi_i\}$, i.e.~$\cD = \{(q_i, \{x_{i,\ell}\}_{\ell=1}^{L_i}, \pi_i, u_i)\}_{i \in [N]}$. We assume that similar hold-out test and validation dataset are also available. 
This setting can be viewed as an offline one-step reinforcement-learning problem in which the state space is comprised of all possible QGs in the support of $\cP$, the action space is comprised of all item permutations, and the reward is the observed utility.

In practice, most QGs are unique, so we observe only one out of $L!$ possible rankings for each. Consequently, even if we propose a better alternative ranking for a given QG, the \emph{counterfactual} utility it would have obtained remains unknown.
To address this, traditional LTR algorithms optimize heuristic offline ranking metrics like Normalized Discounted Cumulative Gain (NDCG)
\citep{jarvelin2002cumulated,burges2006from}, averaged over a test set. When a user interacts with a ranked QG, we also obtain per-item feedback signals $\{y_\ell \geq 0\}$ (e.g.~whether an item was clicked or purchase, or how many minutes it was streamed). Usually, the overall QG-level utility $u$ is some function of these per-item signals. Then, the NDCG of any new ranking $\widehat{\pi}$, on a QG with feedbacks $\{y_\ell\}$ can be defined as
\begin{align}
\ndcg(\widehat{\pi}, \{y_\ell\}) \eqdef \frac{\dcg(\widehat{\pi}, \{y_\ell\})}{\dcg(\pi^*, \{y_\ell\})} \in [0,1]\,, \textrm{ where } \dcg(r, \{y_\ell\}) \eqdef \sum_{\ell = 1}^L \frac{2^{y_\ell}-1}{\log_2(1+r^{-1}(\ell))}\,.
\label{eq:ltr_utility}
\end{align}
DCG assigns a gain $2^{y_\ell} - 1$ for the item $x_\ell$ in a test QG, but its contribution to the metric is discounted by its position $r^{-1}(\ell)$ under the ranking $r$. Thus, NDCG is maximized when the items are ranked in the descending order of their feedback values, i.e., under the optimal ranking $\pi^*$. Traditional LTR methods \citep{burges2006from,swezey2021pirank}, aim to maximize NDCG by optimizing various continuous relaxations of it.
This heuristic of learning to move items with higher feedback signal to the top of list have been highly successful, potentially because (i) items with positive feedback are usually relevant, and (ii) users tend to focus their attention on the top of the list. However, such offline metrics are now well-known to be sub-optimal as they do not perfectly align with the true (hidden) utility \eqref{eq:cltr_utility} we aim to maximize \citep{wang2023well,jeunen2024normalised}. A key advantage of \ouralgo over traditional LTR methods is its ability to \emph{leverage data without click/purchase labels}. Whereas standard pipelines often discard sessions with no purchases/clicks (or treat them as uninformative negatives), our approach can still extract signal from these interactions via its utility modeling and preference estimation. This aligns with recent evidence that leveraging unlabeled or weakly labeled interaction data—e.g., through pretraining or preference modeling—improves ranking quality \citep{hou2024pretrained}. In the next section, we introduce \ouralgo, a data-driven ranking framework that directly maximizes the true LTR utility without relying on heuristics or specific user-behavior assumptions.

\section{\ouralgo: Data-driven LTR Utility Maximization}

In this section, we present the \ouralgo framework, which aims to maximize the true (hidden) LTR utility defined in \eqref{eq:ltr_utility}. At a high level, \ouralgo proceeds in two stages. First, using the logged training data $\cD$, it learns a reward model that predicts the counterfactual utility for any QG and permutation. Then it trains a ranker using the reward model's predictions as supervision, so as to maximize the expected counterfactual LTR utility of the ranker's item arrangement (ranking) policy.

\subsection{Stage 1: Learning the Utility Using a Reward Model}

Let $g(q, \{x_\ell\}, \pi; \phi)$ denote the reward model (parameterized with $\phi$) to predict the scalar utility for the QG $(q, \{x_\ell\})$ and a ranking $\pi$. It is trained solely on the logged query groups, rankings and observed utilities in the training dataset $\cD$. When the utility $U \in \{0, 1\}$ is a binary random variable (e.g. click, purchase), we train $g \in [0,1]$ by minimizing the average binary cross-entropy loss between the observed $u_i$ and the predicted utilities $\widehat{u}_i(\phi) := g(q_i, \{x_{i\ell}\}, \pi_i; \phi)$ over all $i \in [N]$:
\begin{align}
\min_{\phi} \bigg[\textrm{RewardLoss}(\phi) \eqdef - \frac1N \sum_{i=1}^N [u_i \log(\widehat{u}(\phi)) + (1-u_i) \log(1 - \widehat{u}(\phi))] \bigg]\,.
\label{eq:reward_ce_loss}
\end{align}
When the true utility is a continuous random variable (e.g. minutes a song is streamed) we can use regression losses such mean squared error (MSE) $\min_{\phi} (1/N) \sum_{i=1}^N \| u_i - \widehat{u}_i(\phi)\|^2$.
In our experiments, the reward model is instantiated with a transformer encoder, \Enc, due to its ability to model functions over sequences (ranked list of items). Before passing a QG into \Enc each query-item pair $[q,x_\ell]$ is embedded using a text encoder to create the token embedding $\mathbf{e}_\ell$. Next, the ranking of the items $\pi$ is encoded through position encodings $\{\mathbf{p}_k\}$. Then the position encoded tokens are passed to the transformer \Enc. Finally, the predicted utility is computed as the sigmoid of a linear function of the \cls token output. This can be succinctly represented as:
\begin{align}
g(q, \{x_{\ell}\}_{\ell=1}^L, \pi; \phi) &= \sigma \bigg[\mathbf{v}^\top \Enc_{\text{reward}}^{\cls}\left( \left\{ \mathbf{e}_{\pi(k)} + \mathbf{p}_k \right\}_{k=1}^L \right) \bigg ] \,,\label{eq:reward_transformer}
\end{align}
where $\mathbf{e}_{\pi(k)}$ is the token embedding of the query and the $k$-th ranked item. During the second stage of training the ranker, we freeze the reward model parameters $\phi$.

\begin{figure}[!t]
    \centering
    \includegraphics[scale=0.09]{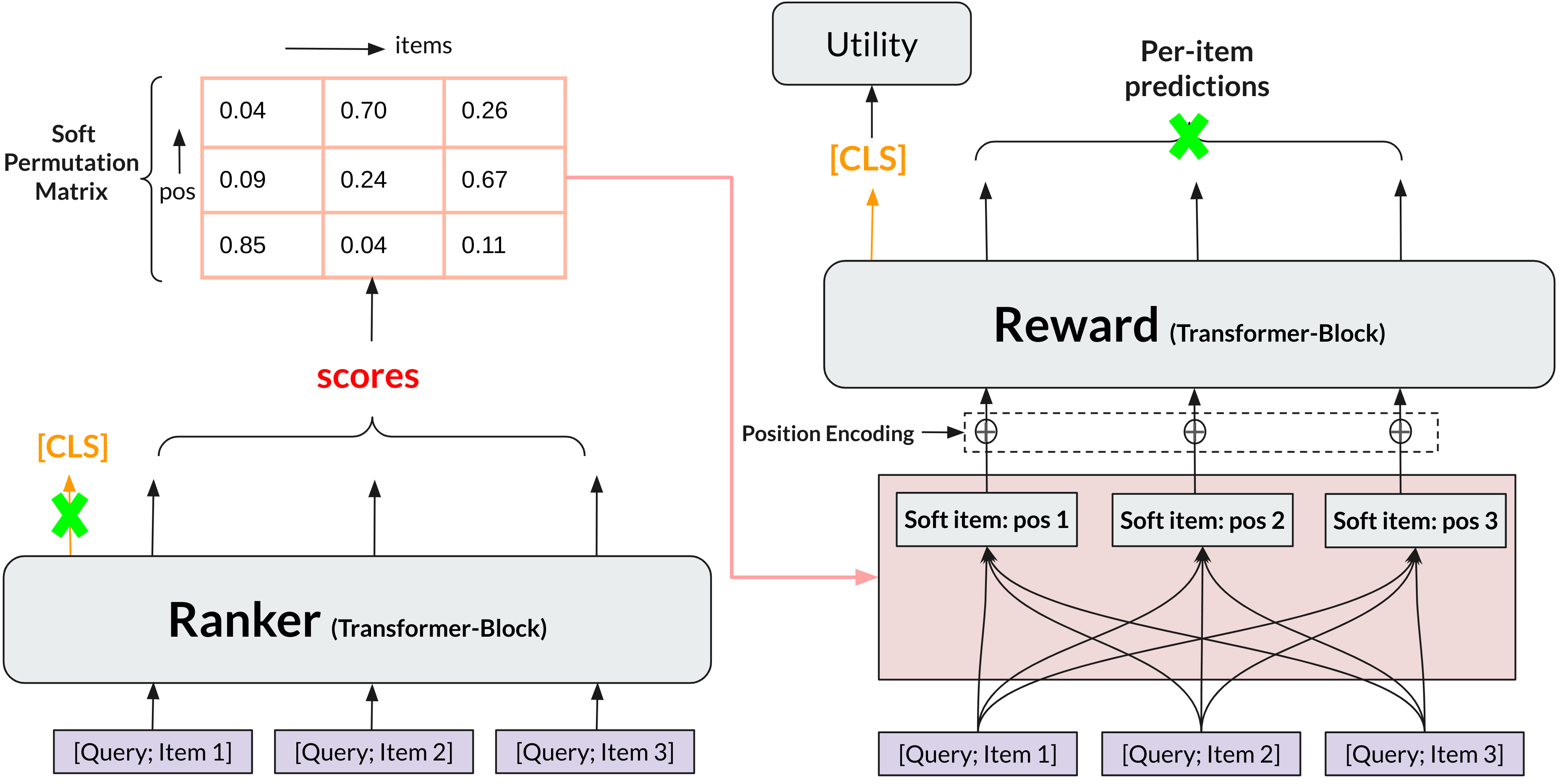}
\caption{\textbf{\ouralgo.} A ranker scores the items in a query group. These scores are used to compute soft item embeddings via a soft permutation matrix. Position encoded soft item embeddings are passed into a reward to estimate its utility. Finally, the ranker is optimized to maximize the predicted utility. %
}
    \label{fig:main}
\end{figure}

\noindent
\textbf{Auxiliary per-item predictor}: Typically the observed utility $u$ is a byproduct of user's interaction with the items. So, we hypothesize that predicting the per-item feedback signals $\{y_\ell\}$ as an auxiliary task would improve the overall quality of the LTR utility prediction. Thus we include an auxiliary prediction head on the item tokens' outputs to predict the feedback signal observed at each ranked position $k \in [L]$. When $y_\ell \in \{0,1\}$ is binary, the predictions can be instantiated as
\begin{align}
\widehat{y}_{k}(\phi) \eqdef \sigma \Big[ \mathbf{\widetilde{v}}^\top  \Enc_{\text{reward}}^{(k)}\left( \left\{ \mathbf{e}_{\pi(k)} + \mathbf{p}_k \right\}_{k=1}^L \right) \Big] \,,\;\;\; \forall\, k \in [L]\,,
\end{align}
where $\Enc_{\text{reward}}^{(k)}$ is the output token at the $k$-th position and $\sigma$ is the sigmoid function. We can learn $\widehat{y}_{k}(\phi)$ alongside $\widehat{u}(\phi)$ by adding the average cross-entropy loss between $\widehat{y}_{k}$ and ${y}_{\pi(k)}$,
\begin{align}
\textrm{ItemLoss}(\phi) \eqdef - \frac1{\sum_{i} L_i} \sum_{i=1}^N \sum_{k=1}^{L_i} [y_{i\pi(k)} \log(\widehat{y}_{ik}(\phi)) + (1-y_{i\pi(k)}) \log(1 - \widehat{y}_{ik}(\phi))]\,.
\label{eq:reward_item_loss}
\end{align}
as an additional regularizer to $\mathrm{RewardLoss}$ \eqref{eq:reward_ce_loss}. Note that during the training of the ranker in the next stage, these auxiliary predictions can be discarded. Our ablation in Section~\ref{sec:expts_counter} shows that the per-item loss provides a moderate boost in performance. We also apply the per-item loss to query groups (QGs) with no purchases (i.e., no positive labels). This enables us to exploit otherwise discarded sessions and stabilize learning in sparse-feedback regimes by providing item-level signals even when list-level purchase supervision is absent.

\subsection{Stage 2: Ranker Reward Maximization through Soft Sorting}

Typically, rankers are modeled as scoring functions that assign a score to each item in a QG. Then the items are ordered in the descending order of their scores to obtain the final ranking. We follow the same pattern and define $f(q, \{x_\ell\}; \theta)$ as a scoring-based ranker which maps a QG $(q, \{x_\ell\})$ to a set of item scores $\{s_\ell\}$. Following our reward model design, we instantiate $f$ using the same transformer backbone architecture. %
Since QG has an unordered set of items, we do not use position encoding. Finally the score is computed as the linear function of the output item tokens, i.e.~
\begin{align}
s_\ell \eqdef f_\ell(q, \{x_\ell\}; \theta) \eqdef \mathbf{w}^\top \text{Enc}_{\text{ranker}}^{(\ell)}\left( \left\{ \mathbf{e}_\ell \right\}_{\ell=1}^L \right) \,,\;\; \forall \ell \in [L]\,,
\end{align}

where $\text{Enc}_{\text{ranker}}^{(\ell)}$ is the output token of the $\ell$-th item.
Our goal is to optimize the effective ranking $\widehat{\pi}$ induced by these scores so that it maximizes the expected counterfactual utility, which is a hidden from us. This is where the reward model comes in handy, as it helps us predict the counterfactual utility as $\widehat{u} := g(q, \{x_\ell\}, \widehat{\pi})$. 
However, since sorting (of the scores) is a discontinuous operation, it is challenging to optimize the scores to maximize the reward.
To enable an end-to-end optimization of the scorer $f$, we resort to a continuous relaxation of the sorting operation.

\paragraph{Soft Permutation via SoftSort.}
\emph{SoftSort} \citep{prillo2020softsort} is a continuous relaxation of sorting operation. It defines a \emph{unimodal row-stochastic} matrix \citep{swezey2021pirank} as the \emph{soft} permutation matrix $\widehat{\Pi}^{(\tau)} \in [0,1]^{L \times L}$. Row $k$ of this matrix corresponds to a probability distribution of the $k$-the ranked item over the set of all items. Formally, we define
\begin{align}
\widehat{\Pi}^{(\tau)}_{k,\ell} \eqdef \frac{\exp\left(-\frac{1}{\tau} \left| s_\ell - s_{\widehat{\pi}(k)} \right| \right)}{\sum_{\ell'=1}^L \exp\left(-\frac{1}{\tau} \left| s_{\ell'} - s_{\widehat{\pi}(k)} \right| \right)}\,, \;\; \forall\, k,\ell \in [L]\,,
\label{eq:softsort}
\end{align}
where $\tau$ is a temperature parameter and $\widehat{\pi}(k)$ is the $k$-th ranked items when (hard) sorting by the scores $\{s_\ell\}$. 
$\widehat{\Pi}^{(\tau)}$ is a continuous function of the scores $\{s_\ell\}$ and when $\tau \to 0$, $\widehat{\Pi}^{(\tau)}$ tends to the binary hard-permutation matrix $\widehat{\Pi}$, where
\begin{align}
\lim_{\tau \to 0} \widehat{\Pi}^{(\tau)}_{k,\ell} = \widehat{\Pi}_{k,\ell} \eqdef \mathbb{I}\{\widehat{\pi}(k) = \ell\}\,, \;\; \forall\, k,\ell \in [L]\,,
\end{align}

assuming the scores are unique.
Using this soft permutation matrix, we can compute a soft item embedding $\widehat{e}^{(\tau)}_k$ at position $k$ as the following convex combination of the true item embeddings
\begin{align}
\widehat{\mathbf{e}}_k^{(\tau)} \eqdef \sum_{\ell \in [L]} \widehat{\Pi}^{(\tau)}_{k,\ell} \mathbf{e}_\ell\,.
\end{align}
It is easy to verify that $\widehat{\mathbf{e}}_k^{(\tau)} \to \mathbf{e}_{\widehat{\pi}(k)}$ when $\tau \to 0$. Note that there are alternate soft permutation matrices like NeuralSort~\citep{grover2019stochastic_neuralsort}, but we adopt SoftSort for its simplicity and state of the art performance \citep{prillo2020softsort}. We then compute a soft reward for these soft item embeddings using
\begin{align}
\widehat{g}(\theta) \eqdef g(q, \{x_{\ell}\}, \widehat{\Pi}^{(\tau)}) &\eqdef \sigma \bigg[\mathbf{v}^\top \Enc_{\text{reward}}^{\cls}\left( \left\{ \widehat{\mathbf{e}}^{(\tau)}_{k} + \mathbf{p}_k \right\}_{k=1}^L \right) \bigg ] \,,\label{eq:soft_reward}
\end{align}
This allows us to compute an approximate predicted reward \eqref{eq:soft_reward} as a continuous function over the ranker scores $\{s_\ell\}$ through the SoftSort matrix. Finally, we optimize the parameters of scorer $f$ to maximize the average approximate reward over the training set in an end-to-end manner:
\begin{align}
\min_\theta \bigg[ \mathrm{RankerLoss}(\theta) \eqdef -\frac1N \sum_{i=1}^N \widehat{g}_i(\theta) \bigg]\,.
\label{eq:ranker_loss}
\end{align}
Even though \ouralgo is maximizing the predicted utility of the soft ranking, we hypothesize that it generalizes well and produces rankings with higher expected counterfactual utility than prior LTR methods.

An alternative to \textit{soft-permutation matrices} is the Plackett–Luce (PL) model, which offers efficient, closed-form gradients for ranking. However, counterfactual learning with PL requires Monte Carlo sampling, leading to high-variance estimates in large action spaces. While variance reduction helps~\citep{pgrank}, unbiased learning fundamentally depends on stochastic logging, which is incompatible with real-world deterministic rankers designed for stability and trust. Soft permutation relaxations like SoftSort~\citep{prillo2020softsort} approximate permutations in continuous space, enabling gradient-based optimization without sampling. Though computationally more expensive, they reduce variance and support end-to-end utility maximization. We pair SoftSort with a learned reward model that generalizes over logged data, enabling scalable training under deterministic logs. This approach trades unbiasedness for stability and practicality in real-world ranking systems.

\paragraph{Mitigating reward misspecification.} One challenge of reward modeling the hidden counterfactual utility is model misspecification, i.e.~a gap between the predicted and the true utilities. A misspecified reward can misguide the ranker into wrong ranking policies \cite{coste2023rewardhacking, clark2016faulty}. To mitigate this issue we propose a sample reweighting scheme which modifies the ranker loss as
\begin{align}
\mathrm{RankerLoss}^{(\lambda)}(\theta) \eqdef - \frac1N \sum_{i=1}^N w_i \cdot \widehat{g}_i(\theta)\,, \textrm{ where } w_i = 1 - \lambda |u_i - \widehat{u}_i| \in [0, 1] \text{ and } \lambda \in [0,1] \,,\;\; \forall\, i\,.
\label{eqn:misspec}
\end{align}
Above loss is a pessimistic upperbound to $\mathrm{RankerLoss}(\theta)$ \eqref{eq:ranker_loss}. This reward down-weighting scheme is motivated by a conjecture that when the observed utility $u_i$ for the $i$-th training QG and the corresponding prediction $\widehat{u}_i(\theta)$ are different, the utility prediction on new ranking of this QG would also be less reliable. Through an ablation in Section \ref{sec:expts_counter} we show that reward misspecification correction slightly improves the \ouralgo performance.

\commentout{
\begin{align}
\max_\theta \, \mathbb{E}_{\substack{(q, \{i\}_L, \pi, y) \sim \mathcal{D} \\ \hat{\pi} \sim \text{SoftSort}(f(q, \{i\}_L))}} \left[
\big | g(q, \{i\}_L,\pi) - \lambda\left|y - g(q, \{i\}_L, \pi)\right| \big |
\times \frac{g(q, \{i\}_L, \hat{\pi})}{g(q, \{i\}_L,\pi)}
\right]
\label{eqn:misspec}
\end{align}

\begin{proposition}
    Mitigating reward misspecification facilitates exploration of confidently estimated regions within the counterfactual space, ultimately enhancing the ranker's training effectiveness.
\end{proposition}
\begin{proof}
We can rearrange the terms in Eqn~\ref{eqn:misspec}:
\[
\left( \frac{| g(q, \{i\}_L,\pi) - |y - g(q, \{i\}_L, \pi)| \big |}{g(q, \{i\}_L, \pi)} \right) \times g(q, \{i\}_L, \hat{\pi})
\]
\[
= \left( \frac{|g(q, \{i\}_L,\pi) - m|}{g(q, \{i\}_L, \pi)} \right) \times g(q, \{i\}_L, \hat{\pi})
\]
where $m$ is the residual error due to misspecification and is a positive constant.\\
\textit{Case 1}. When the reward model accurately reflects user preferences under production policy (obtained from logged data), the residual error becomes negligible, effectively reducing the objective to standard reward maximization.

\[
\left( \frac{|g(q, \{i\}_L,\pi) - (m\approx 0)|}{g(q, \{i\}_L, \pi)} \right) \times g(q, \{i\}_L, \hat{\pi})
 \approx g(q, \{i\}_L, \hat{\pi}) 
 \]

\textit{Case 2}. When the reward model misaligns with user preferences under the production policy, the residual error is substantial, leading to a penalization of the reward for those samples: 
\[
\left( \frac{|g(q, \{i\}_L,\pi) - (m > 0)|}{g(q, \{i\}_L, \pi)} \right) \times g(q, \{i\}_L, \hat{\pi})
\]
The degree of this penalization is proportional to the discrepancy between the reward model's estimate and the true user preferences under the production policy.
\end{proof}
}

\section{Experimental Results}
\vspace{-0.1cm}
\label{sec:exp}

\textbf{Datasets}. Public large-scale datasets for learning-to-rank (LTR), especially in counterfactual settings, are scarce. To the best of our knowledge, we propose the first reproducible testbeds for counterfactual ranking evaluation. We utilize two existing large-scale LTR datasets: Baidu-ULTR~\citep{hager2024unbiased_sigir_ultr, zou2022large_baidu_ultr} and Amazon KDD-Cup~\citep{reddy2022shopping_kddcup}, to construct these testbenches, enabling rigorous evaluation of permutation-aware ranking policies.
Baidu-ULTR contains 1.8M query groups (11.7M query-document pairs) and 590K validation/test sessions. Amazon KDD-Cup comprises 130K queries and 2.6M annotated query-product pairs with rich textual metadata. We generate 400K training and 50K validation/test query groups by sampling permutations of products per query. See Appendix~\ref{sec:apndx_dataset} for further details.

\paragraph{Implementation Details and Baselines.}Our reward models and rankers are based on a transformer architecture with 12 layers, 768 hidden dimensions, 12 attention heads, and roughly 110M parameters. We set $\tau=0.5$ and $\lambda=0.7$ for all \ouralgo experiments, based on tuning over a held-out set. Ablations with varying values and further implementation details are provided in Appendix~\ref{sec:apndx_implemen}.
For comparison, we implement two utility-based counterfactual ranking methods: URCC~\citep{urcc}, which uses a LambdaLoss-based pairwise objective, and PG-rank~\citep{pgrank}, which applies Plackett–Luce modeling with policy gradients. Our variants, URCC$^*$  and PG-rank$^*$, replace their offline metric utility (e.g.~NDCG) with our transformer-based reward model for improved counterfactual performance. 
Additionally, we train standard LTR baselines: ListNet~\citep{cao2007learning_listnet}, ListMLE~\citep{xia2008listwise}, LambdaRank~\citep{wang2018lambdaloss}, and PiRank~\citep{swezey2021pirank}, all using the same transformer architecture for fair comparison across supervision methods.

\subsection{Large-Scale Reproducible Testbenches for Counterfactual LTR}
\vspace{-0.2cm}
\label{sec:expts_counter}

To enable reproducible evaluation of ranking policies without online A/B testing, we introduce two complementary testbeds: PO-Eval, which leverages a parametric click model, and LAU-Eval, which simulates human-like shopping behavior via LLM reasoning. Together, they enable holistic, counterfactual assessment of ranking algorithms under both statistical and behavioral lenses.

\paragraph{Parametric Oracle Evaluation (PO-Eval).}  %
To simulate a click-based counterfactual recommendation setting, we build a testbed from the Baidu-ULTR dataset~\citep{hager2024unbiased_sigir_ultr}, employing a \emph{pretrained parametric IPS model} as the oracle for supervision. This model estimates the click probability at position \( \ell \) as \( P(C) = P(E_\ell) \cdot P(R_{q, i})\), where \( P(E_\ell) \) is the position-dependent examination probability and \( P(R_{q, i}) \) is the click probability given examination %
We use this oracle to sample binary clicks for training and later reuse it for counterfactual evaluation of new ranking policies. For each ranked query group (QG), we compute the expected utility as the probability of at least one click and the observed utility as a binary indicator of at least one sampled click. This setup provides a realistic and repeatable framework for evaluating how well learned rankers align with user behavior modeled by the IPS-oracle. See Appendix~\ref{sec:apndx_proofs_ips_click} for details on the parametric model and the derivation of utility metrics.

Table~\ref{tab:combined} reports counterfactual evaluation results using PO-Eval, where we leverage a pre-trained parametric IPS-Oracle to simulate user clicks and assess ranking quality. The IPS-based utility $Pr(\#Clicks \geq 1)$ captures the expected probability of at least one click per ranked list, while \( \text{NDCG}_{\text{click}} \) measures how high are the originally clicked items in the test dataset ranked. The \textit{Upper-Bound} is computed by ranking items in descending order of \( P(R) \), which maximizes utility due to the rearrangement inequality~\citep{day1972rearrangement} (see Appendix~\ref{sec:apndx_proofs}). %
Traditional LTR baselines (ListNet, ListMLE, LambdaRank, PiRank), trained with per-item IPS–sampled clicks, achieve strong offline/surrogate metrics under Eq.~\ref{eq:ltr_utility} (e.g., \(NDCG_{\text{click}}\)) but fail to capture the true user utility in Eq.~\ref{eq:cltr_utility} (e.g., \(\Pr(\#\text{Clicks}\ge 1)\)). 
URCC$^*$ yields the lowest performance, as it relies heavily on a strong \emph{pretrained} ranker to initialize its search; without such initialization, its effectiveness diminishes significantly (see Appendix \ref{sec:apndx_ablation}). In particular, URCC$^*$ explores only the neighborhood of the current permutations via pairwise position swaps, which (i) induces quadratic complexity and (ii) leads to \emph{pessimistic} exploration that can miss superior rankings outside this local region. In contrast, \ouralgo does not require any pretrained ranker and performs counterfactual optimization directly, enabling broader exploration beyond the data rankings from logged data.
For PG-Rank*, we observe that increasing the number of Monte Carlo samples (MC = 1, 5, 10) reduces variance in its estimates, which improves performance, albeit at the cost of longer training time (see Appendix Section \ref{sec:apndx_baselines} for details). In contrast \ouralgo attains the highest utility under IPS-Oracle, despite slightly lower $\text{NDCG}_{\mathrm{click}}$ than some baselines. This reflects a key distinction: proxy metrics, such as NDCG (Eqn. \ref{eq:ltr_utility}), may not fully align with the true user utility (Eqn. \ref{eq:cltr_utility}). By directly optimizing counterfactual reward, \ouralgo better aligns with behavioral objectives beyond conventional ranking accuracy. %

\begin{table}[t]
\centering
\caption{\textbf{Counterfactual and surrogate evaluation across two settings.} 
The table compares ranking methods under (i) \textbf{PO-Eval} and (ii) \textbf{LAU-Eval}. 
For each setting we report a \emph{counterfactual} metric: \(\Pr(\#\mathrm{Clicks}\!\ge\!1)\) for PO-Eval and \(\Pr(\#\mathrm{Purchases}\!\ge\!1)\) for LAU-Eval, reflecting the probability of at least one positive user action; and an \emph{offline/surrogate} metric: \(\mathrm{NDCG}_{\mathrm{click}}\) and \(\mathrm{NDCG}_{\mathrm{purchase}}\), respectively, computed from logged labels (Eq.~\ref{eq:ltr_utility}). 
While most baselines achieve high surrogate scores, these gains do not consistently translate into higher counterfactual utility (Eq.~\ref{eq:cltr_utility}). 
Notably, URCC$^*$ and PG-rank$^*$ attain competitive \(\mathrm{NDCG}\) yet underperform on the counterfactual metric, whereas \ouralgo\ delivers the highest purchase rate in LAU-Eval while remaining competitive on surrogate metrics. 
The \texttt{policy\_in\_data} row reflects the original logged ordering. %
}
\label{tab:combined}
\resizebox{1.0\textwidth}{!}{
\begin{tabular}{@{}lcccc@{}}
\toprule
& \multicolumn{2}{c}{\textbf{PO-Eval}} & \multicolumn{2}{c}{\textbf{LAU-Eval}}\\
& \emph{Counterfactual ({\color{green} \cmark})} & \emph{Offline ({\color{red} \xmark})}
& \emph{Counterfactual ({\color{green} \cmark})} & \emph{Offline ({\color{red} \xmark})} \\
\cmidrule(lr){2-3} \cmidrule(lr){4-5}
\textbf{Method} & $\mathrm{Pr}(\#\mathrm{Clicks} \geq 1)$ & $\text{NDCG}_{\mathrm{click}}$ 
& $\mathrm{Pr}(\#\mathrm{Purchases} \geq 1)$  & $\text{NDCG}_{\mathrm{purchase}}$ \\
\midrule
Upper-Bound & 0.553 $\pm$ 0.0007 & -- & -- & -- \\
Policy in data & 0.475 $\pm$ 0.0004 & 0.211 $\pm$ 0.0003 & 0.497 $\pm$ 0.009 & 0.496 $\pm$ 0.009 \\
\midrule
ListNet \citep{cao2007learning_listnet} & 0.523 $\pm$ 0.0007 & 0.376 $\pm$ 0.0002 & 0.521 $\pm$ 0.009 & 0.405 $\pm$ 0.009 \\
ListMLE \citep{xia2008listwise} & 0.522 $\pm$ 0.0007 & 0.377 $\pm$ 0.0002 & 0.522  $\pm$ 0.008 & 0.402 $\pm$ 0.008 \\
LambdaRank \citep{wang2018lambdaloss} & 0.524 $\pm$ 0.0007 & 0.378 $\pm$ 0.0002 & 0.523 $\pm$ 0.009 & 0.406 $\pm$ 0.009 \\
PiRank \citep{swezey2021pirank} & 0.525 $\pm$ 0.0007 & 0.378 $\pm$ 0.0002 & 0.528 $\pm$ 0.007 & 0.408 $\pm$ 0.009 \\
\midrule
URCC$^*$  \citep{urcc} & 0.462 $\pm$ 0.0005 & 0.315 $\pm$ 0.0004 & 0.471 $\pm$ 0.008 & 0.401 $\pm$ 0.007 \\
PG-rank$^*$  \citep{pgrank} & 0.501 $\pm$ 0.0005 & 0.327 $\pm$ 0.0002 & 0.489 $\pm$ 0.007 & 0.402 $\pm$ 0.008 \\
\ouralgo & \textbf{0.536 $\pm$ 0.0007} & 0.370 $\pm$ 0.0002 & \textbf{0.561 $\pm$ 0.008} & 0.401 $\pm$ 0.007 \\
\bottomrule
\end{tabular}
}
\end{table}

\paragraph{LLM-based User Simulation (LAU-Eval).} %
While PO-Eval captures position bias via IPS-Oracle supervision, it does not account for broader behavioral patterns such as brand bias, similarity aversion, or irrelevance bias. To complement PO-Eval and more fully assess human-centered ranking behavior, we introduce the LAU-Eval framework. In this setup, a large language model (LLM) is prompted to simulate user shopping behavior given a query and its associated product list from the Amazon KDD-Cup dataset. The prompt incorporates behavioral factors such as position bias, brand bias, irrelevance bias, and color bias (full details are provided in Appendix~\ref{sec:apndx_lau_eval}). The LLM generates a binary purchase decision $D(\text{purchase}) \in \{0,1\}$, which serves as the reward signal for training a reward model and optimizing rankers. For evaluation, the same prompt is used: each ranker’s ranked item list is assesed by the LLM, and performance is reported as the average purchase decision rate on a held-out test set. For LTR methods that do not rely on reward modeling, we instead use the per-item binary LLM-purchase decision as the  training signal. Higher values indicate stronger alignment with human-centered behavioral criteria. Refer to the Appendix Section~\ref{sec:apndx_lau_eval} for implementation details.

Under \textit{LAU-Eval}, which measures binary purchase decisions made by the LLM, we observe clear differences across methods. The \texttt{policy\_in\_data} baseline (original item order) attains an average purchase rate of \(0.497\). Classical listwise approaches—ListNet, ListMLE, LambdaRank, and PiRank—yield only modest gains on the true utility \(\Pr(\#\text{Purchase}\!\ge\!1)\), reaching \(0.500\)–\(0.513\), while achieving very high scores on the offline/surrogate utility \((\mathrm{NDCG}_{\text{purchase}})\). These LTR methods largely succeed by moving the purchased item to the top, which inflates surrogate metrics but does not faithfully capture true preferences under the LLM-Oracle, such as brand or color bias among the items, and therefore does not consistently increase purchases. This underscores the need to optimize \emph{counterfactual utility} as the primary metric for modeling human ranking behavior. We also observe a clear mismatch between surrogate and counterfactual objectives for counterfactual baselines: both PG-rank$^*$ and URCC$^*$ attain a strong \(\mathrm{NDCG}_{\text{purchase}}\) (formulated by Eqn \ref{eq:ltr_utility}), yet \emph{both} methods yield lower values on the counterfactual metric (purchase rate as formulated by Eqn \ref{eq:cltr_utility}). This indicates that optimizing the ranking-aware surrogate alone can overfit to list reshuffling (e.g., moving a known purchased item to the top) without improving the actual decision outcome measured by \(\Pr(\#\text{Purchase}\!\ge\!1)\).  %

\begin{figure}[!t]
    \centering
    \includegraphics[scale=0.23]{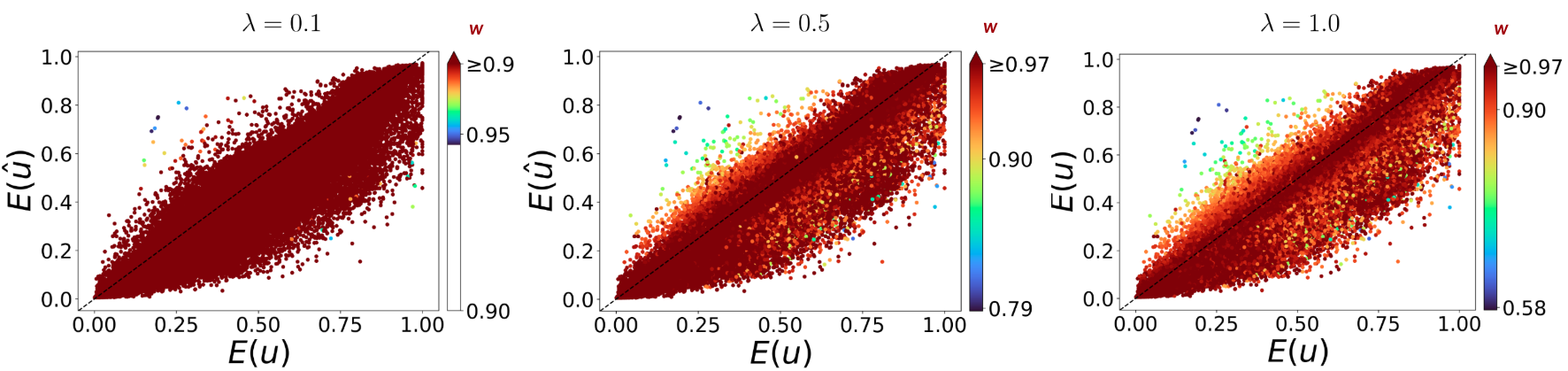}
\caption{\textbf{Reward misspecification correction on PO-Eval.} Each point represents a ranked list with true utility (u: estimated by IPS-Oracle) and predicted utility ($\hat{u}$: estimated by utility model) from the ranker. Colors indicate ($w = 1 - \lambda |u_{\text{logged}} - \hat{u}_{\text{logged}}|$), showing how increasing ($\lambda$) down-weights overconfident or misaligned samples to emphasize well-calibrated predictions.} %

    \label{fig:misspec}
\end{figure}
\paragraph{Ablations.} We ablate the per-item regularizer and the two parameters of \ouralgo: the SoftSort temperature \(\tau\), which controls the sharpness of the permutation approximation \eqref{eq:softsort}, and the misspecification correction strength \(\lambda\), which down-weights rewards on QGs with high prediction error \eqref{eqn:misspec}. 
Removing the auxiliary item-level reward loss (Eqn~\ref{eq:reward_item_loss}) decreased the final expected counterfactual utility of the learned ranker. This indicates that learning to predict the per-item feedback enhances the reward model’s generalization and hence improves downstream ranking performance. As shown in Figure~\ref{fig:misspec}, increasing \(\lambda\) progressively reduces the influence of unreliable reward estimates by lowering their instance weights, leading to more stable learning. As $\lambda$ increases, the influence of low-confidence predictions (lower $w$) diminishes, effectively down-weighting misspecified instances. This correction improves stability by emphasizing samples with well-aligned predicted rewards. For illustration, we display soft utility scores from Eqn~\ref{eqn:ips_utility_soft}; however, all experiments use binary utility signals as defined in Eqn~\ref{eqn:ips_utility_bin}.
We find that \(\tau=0.5\) and \(\lambda=0.7\) achieve the best trade-off between stability and performance. Full details of these ablations are reported in Appendix~\ref{sec:apndx_ablation}. %

\subsection{Baidu-ULTR Dataset with Real User Clicks}
\label{sec:expts_real}

While we previously used the Baidu-ULTR dataset within the PO-Eval framework under IPS-Oracle supervision, here we instead rely directly on the real click signals provided in the data. 
\begin{wraptable}{l}{0.55\columnwidth}
\centering
\caption{\textbf{Baidu-ULTR with real clicks.} \ouralgo achieves SOTA performance. $^\dagger$ metrics are taken from \citet{hager2024unbiased_sigir_ultr}} %
\label{tab:factual}
\resizebox{0.55\columnwidth}{!}
{
\begin{tabular}{@{}lcc@{}}
\toprule
\textbf{Method} &
\textbf{$DCG_\mathrm{rel}@5$} &
\textbf{$DCG_\mathrm{rel}@10$} \\
\midrule
Point IPS$^\dagger$ \citep{hager2024unbiased_sigir_ultr} & 4.79 & 7.43 \\
List IPS$^\dagger$ \citep{hager2024unbiased_sigir_ultr} & 5.20 & 7.88 \\
LambdaRank$^\dagger$ \citep{hager2024unbiased_sigir_ultr} & 5.45 & 8.23 \\
\midrule
ListNet \citep{cao2007learning_listnet} & 5.05 & 7.64 \\
ListMLE \citep{xia2008listwise} & 5.13 & 7.88 \\
PiRank \citep{swezey2021pirank} & 5.23 & 8.01 \\
\midrule
URCC$^*$  \cite{urcc}& 5.01 & 7.44 \\
PG-rank$^*$  \cite{pgrank} &  5.09 & 7.62 \\
\ouralgo &  \textbf{5.83} & \textbf{8.42} \\
\bottomrule
\end{tabular}
}
\end{wraptable}
Following the protocol in~\citep{hager2024unbiased_sigir_ultr}, models are trained with a binary click labels for each query groups ($u=1$ if any item is clicked, otherwise $0$). 
Since counterfactual evaluation is not feasible here, we follow~\citep{hager2024unbiased_sigir_ultr} and report \emph{Relevance DCG} at 5 and 10, computed on the human-assigned relevance labels provided with the test set\footnote{We report DCG rather than NDCG for consistency with~\citep{hager2024unbiased_sigir_ultr} %
}.  
As shown in Table~\ref{tab:factual}, our method achieves a new state of the art $DCG@5$ and $DCG@10$ across all baselines. Importantly, these improvements are observed on human-assigned relevance labels that were never used in training by any method. This is particularly noteworthy given that our method is not optimized for relevance DCG. %
These results highlight both the robustness of our approach and its ability to generalize to real human feedback in large-scale search settings.

\vspace{-0.2cm}
\section{Conclusion}
\vspace{-0.3cm}

We present \ouralgo, a counterfactual ranking framework that directly optimizes a behaviorally grounded utility instead of relying on proxy click-based surrogates. Notably, our approach accomplishes this without imposing any explicit modeling assumptions. Architecturally, \ouralgo\ uses \emph{SoftSort} to produce a differentiable soft permutation matrix, enabling end-to-end learning with \emph{soft item embeddings} (convex combinations over items) that feed a utility model. To guard against reward model misspecification, we include a \emph{misspecification regularization} term which is an explicit $\lambda$-weighted correction that penalizes over-reliance on noisy preference signals and stabilizes updates against spurious gains. Through the proposed \textit{PO-Eval} and \textit{LAU-Eval} protocols, we showed a systematic mismatch between offline/surrogate metrics (e.g., \(\mathrm{NDCG}_{\text{purchase}}\)) and true decision outcomes, and demonstrated that \ouralgo\ achieves the highest purchase rates while remaining competitive on surrogate metrics. Unlike URCC$^*$, \ouralgo\ does \emph{not} require a pretrained ranker and can leverage sessions without purchase labels, extracting useful signal in sparse-feedback regimes. Ablations further indicate that auxiliary per-item losses (including on purchase-free QGs) provide consistent, moderate gains. Overall, aligning training and evaluation with counterfactual utility yields models that better capture decision-relevant user behavior than traditional LTR or locally exploratory counterfactual baselines.

\bibliography{iclr2026_conference}
\bibliographystyle{iclr2026_conference}

\appendix

\clearpage

\section{Proofs and conceptual details}
\label{sec:apndx_proofs}

\subsection{Click-based Utility for PO-Eval.}
\label{sec:apndx_proofs_ips_click}

The IPS-Oracle simulates user clicks as a probabilistic function of both position-dependent examination and item-specific relevance. Specifically, the click probability for item $x_{\pi(\ell)}$ at position $\ell$ under ranking $\pi$ is modeled as:
\begin{align}
P(C_{q, x_{\pi(\ell)}, \ell}) = P(E_\ell) \cdot \sigma(R_{q, x_{\pi(\ell)}})
\end{align}
where $P(E_\ell)$ denotes the examination probability at position $\ell$, and $\sigma(R_{q, x_{\pi(\ell)}})$ is the probability of a click given examination. Given a query group $(q, \{x_\ell\}_{\ell=1}^L, \pi)$, the click indicator for each item is sampled as:
\begin{align}
c_{q, x_{\pi(\ell)}, \ell} \sim \mathrm{Bernoulli}\big(P(C_{q, x_{\pi(\ell)}, \ell})\big)
\end{align}

We define the \emph{group-level utility} under the logged policy as a binary signal indicating whether at least one item in the list was clicked:
\begin{align}
U(q, \{x_\ell\}, \pi) =
\begin{cases}
1, & \text{if } \sum_{\ell=1}^L c_{q, x_{\pi(\ell)}, \ell} > 0, \\
0, & \text{otherwise.}
\end{cases}
\label{eqn:ips_utility_bin}
\end{align}
The corresponding observed utility in the dataset, $u$, is a realization of $U(q, \{x_\ell\}, \pi_{\text{log}})$ under the logged ranking $\pi_{\text{log}}$.

To obtain a differentiable approximation, we define the expected probability of at least one click as:
\begin{align}
U_{\text{IPS}}(q, \{x_\ell\}, \pi)
= 1 - \prod_{\ell=1}^L \left(1 - P(E_\ell) \cdot \sigma\big(R_{q, x_{\pi(\ell)}}\big)\right)
\label{eqn:ips_utility_soft}
\end{align}
This smoothed utility represents the expected engagement for ranking $\pi$ and serves as a continuous training signal. The reward model is trained to predict the binary group-level utility $u \in \{0,1\}$ from the logged policy, while the ranker maximizes the expected soft utility $U_{\text{IPS}}$ under its own predicted rankings. This formulation bridges synthetic click modeling with realistic counterfactual feedback, enabling effective utility-based optimization even without direct supervision on full permutations.

\subsection{Ideal IPS-Oracle: Rearrangement inequality}
\label{sec:apndx_ideal}
\begin{theorem}[Ideal Ranking Maximizes Utility via Rearrangement Inequality]
Let $\mathbf{r} = (r_1, \ldots, r_n) \in \mathbb{R}_{\ge 0}^n$ be a vector of predicted relevance scores, and let $\mathbf{e} = (e_1, \ldots, e_n) \in \mathbb{R}_{\ge 0}^n$ be a non-increasing sequence of examination probabilities: $e_1 \ge e_2 \ge \ldots \ge e_n$. Let $\pi^*$ be the permutation that sorts $\mathbf{r}$ in descending order: $r_{\pi^*(1)} \ge r_{\pi^*(2)} \ge \ldots \ge r_{\pi^*(n)}$. Then, for any permutation $\pi \in \mathcal{S}_n$, we have:
\[
\sum_{i=1}^n e_i \cdot r_{\pi^*(i)} \ge \sum_{i=1}^n e_i \cdot r_{\pi(i)}
\]
\end{theorem}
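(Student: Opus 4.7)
The plan is to recognize this as a direct corollary of the classical rearrangement inequality and to prove it either by citing that inequality or by giving a short self-contained swap (bubble-sort) argument. I would prefer the self-contained route since the statement is simple enough, and it keeps the paper mostly self-reliant; citing Hardy--Littlewood--P\'olya (or the already-referenced Day) is acceptable as a backup.

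First I would set up notation: write $a_i := e_i$ and $b_i := r_{\pi(i)}$, so the objective is $S(\pi) = \sum_{i=1}^n a_i b_i$. The hypothesis is that $(a_i)$ is non-increasing; $\pi^*$ is chosen so that $(b_i)$ is also non-increasing under $\pi = \pi^*$. I would then argue by contradiction: suppose some $\pi$ attains a larger value of $S$ than $\pi^*$. Then I would locate an ``inversion'' in $\pi$, namely indices $i < j$ with $r_{\pi(i)} < r_{\pi(j)}$ (such indices must exist whenever $\pi$ does not produce a non-increasing sequence of $r$-values). Since $i < j$ and $(e_i)$ is non-increasing, $e_i \ge e_j$. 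Define $\pi'$ to be the transposition of $\pi$ at positions $i,j$. The change in objective is
\begin{equation*}
S(\pi') - S(\pi) = (e_i - e_j)\bigl(r_{\pi(j)} - r_{\pi(i)}\bigr) \ge 0,
\end{equation*}
since both factors are non-negative. Hence the swap does not decrease $S$, and it strictly reduces the number of inversions.

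Iterating this swap (a bounded process, as the number of inversions is a non-negative integer that strictly decreases each time we swap a pair with $e_i > e_j$; ties in $e$ can be handled by noting that any resulting tie leaves $S$ unchanged and we may break ties arbitrarily without affecting optimality), we reach a permutation with no inversions, i.e., $\pi^*$, and along the way $S$ has only non-decreased. Therefore $S(\pi^*) \ge S(\pi)$ for every $\pi \in \mathcal{S}_n$, which is the claim. Finally I would connect this back to the paper: because the expected-click utility in Eqn.~\ref{eqn:ips_utility} is $1 - \prod_\ell (1 - e_\ell\,\sigma(r_{\pi(\ell)}))$, maximizing it is equivalent to minimizing $\prod_\ell(1 - e_\ell\,\sigma(r_{\pi(\ell)}))$; taking logs reduces this to maximizing $\sum_\ell -\log(1 - e_\ell \sigma(r_{\pi(\ell)}))$, which, since $-\log(1-x)$ is increasing in $x$ and $e_\ell\sigma(r_{\pi(\ell)}) \in [0,1)$, is again aligned with the sorted-by-$r$ permutation via the same rearrangement argument applied monotonically.

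The main obstacle is essentially cosmetic: the stated theorem is about $\sum e_i r_{\pi(i)}$, but what KD-Eval actually ranks by is the product form of Eqn.~\ref{eqn:ips_utility}. The ``hard'' part is making sure the reduction from the product utility to the additive rearrangement form is tight, i.e., that the optimal permutation for the product objective coincides with $\pi^*$; the argument above via monotone transform $x \mapsto -\log(1-x)$ handles this cleanly, but I would flag it explicitly so the reader sees that the theorem as stated is the right surrogate for the ``Ideal'' upper bound reported in Table~\ref{tab:ips}.
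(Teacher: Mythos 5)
Your proof is correct, and it rests on the same underlying fact as the paper's: the classical rearrangement inequality. The difference is one of self-containment and scope. The paper's proof is a one-line appeal to the rearrangement inequality (citing Day), whereas you reprove it via the standard adjacent-swap argument, computing $S(\pi')-S(\pi)=(e_i-e_j)(r_{\pi(j)}-r_{\pi(i)})\ge 0$ and terminating by inversion counting; that is a perfectly valid, more elementary presentation of the same idea. More substantively, you go beyond the paper by addressing the mismatch between the linear objective $\sum_i e_i r_{\pi(i)}$ in the theorem statement and the product-form utility $U_{\text{IPS}}=1-\prod_\ell(1-e_\ell\sigma(r_{\pi_\ell}))$ actually used for the \emph{Ideal} row; the paper silently treats the theorem as justifying the latter. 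Your reduction via $x\mapsto-\log(1-x)$ is the right move, but the justification "since $-\log(1-x)$ is increasing" is not quite enough on its own: for the swap argument to apply to $\sum_\ell h(e_\ell,\rho_{\pi(\ell)})$ with $h(e,\rho)=-\log(1-e\rho)$ you need $h$ to be supermodular, i.e.\ $h(e_i,\rho')+h(e_j,\rho)\ge h(e_i,\rho)+h(e_j,\rho')$ whenever $e_i\ge e_j$ and $\rho'\ge\rho$. This does hold, since $\partial^2 h/\partial e\,\partial\rho = 1/(1-e\rho)^2>0$ on $e\rho\in[0,1)$ (equivalently, $-\log(1-x)$ is increasing \emph{and convex} and $e\rho$ is supermodular), so your conclusion stands; just state the supermodularity check explicitly rather than leaning on monotonicity alone.
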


\begin{proof}
This is a direct consequence of the classical rearrangement inequality \citep{day1972rearrangement}. Among all permutations $\pi$ of the relevance scores, the weighted sum $\sum_i e_i \cdot r_{\pi(i)}$ is maximized when the $r_{\pi(i)}$ are ordered in the same way as the $e_i$, i.e., both decreasing. Hence, sorting $\mathbf{r}$ in descending order and aligning it with the already sorted $\mathbf{e}$ gives the maximal utility.
\end{proof}

Above analysis shows that ideal ranking order under the IPS Oracle is ordering the items such the sorting of item relevance scores and examination probabilities result in the same permutation.

\subsection{Details of baselines}
\label{sec:apndx_baselines}
\subsubsection{PG-rank$^*$ : PG-rank with Learned Reward Model.}
We extend the PG-Rank framework \citep{pgrank} by replacing the handcrafted reward (e.g., NDCG) with a learned reward model $g(q, \{i\}_L, \pi)$ that scores entire permutations based on user utility. The goal is to maximize the expected reward under the Plackett–Luce distribution induced by the ranker's scores:

\begin{equation}
\mathcal{L}_{\text{PG-reward}}(\theta) = \mathbb{E}_{\pi \sim \mathbb{P}_\theta} \left[ g(q, \{i\}_L, \pi) \right]
\end{equation}

where $\mathbb{P}_\theta(\pi)$ is the Plackett–Luce distribution over permutations, parameterized by model scores $s_1, \dots, s_L$ for each item in the query group. To enable backpropagation through the sampled permutations, we adopt the Gumbel-Softmax trick as in the original PG-Rank implementation, which provides a continuous relaxation of the discrete sampling process.

The gradient of this objective is estimated using the REINFORCE trick with a baseline $b$ for variance reduction (adopted from PG-rank \citep{pgrank, kool2019_leave_1_out}):

\begin{equation}
\nabla_\theta \mathcal{L}_{\text{PG-reward}} \approx \frac{1}{K} \sum_{k=1}^K \left[ \left( g(\pi^{(k)}) - b \right) \cdot \nabla_\theta \log \mathbb{P}_\theta(\pi^{(k)}) \right]
\end{equation}

where $\pi^{(k)} \sim \mathbb{P}_\theta$ are $K$ Monte Carlo samples drawn from the Plackett–Luce model.

The log-probability of a sampled permutation $\pi$ under this model is given by:

\begin{equation}
\log \mathbb{P}_\theta(\pi) = \sum_{k=1}^L \left[ s_{\pi(k)} - \log \sum_{j=k}^L \exp(s_{\pi(j)}) \right]
\end{equation}

This formulation allows us to train the ranking model directly on learned, utility-aligned reward signals using fully differentiable, sample-based policy gradients.

\subsubsection{URCC$^*$  with Learned Reward Model.}
URCC \citep{urcc} proposes a two-stage counterfactual reranking framework that jointly learns a set-aware utility function and a context-aware reranker. The utility model in URCC is itself learned from data and used to guide the optimization of the reranker via a pairwise ranking loss over permutations. Since the official implementation of URCC is not publicly available, we re-implemented the method using our own architecture.

In our version of URCC$^*$ , we retain the core two-stage structure but implement the utility model $g(q, \{i\}_L, \pi)$ as a Transformer-based encoder trained to predict user utility over full permutations. Given a query $q$ and a set of items $\{i\}_L$, the reward model assigns a scalar score to a permutation $\pi$:
\begin{equation}
S_\pi = g(q, \{i\}_L, \pi)
\end{equation}

Following URCC, we then train the ranker $f_\theta$ to maximize this learned reward by optimizing a context-aware pairwise loss. For a pair of permutations $(\pi^+, \pi^-)$ such that $g(q, \{i\}_L, \pi^+) > g(q, \{i\}_L, \pi^-)$, we minimize the following objective:
\begin{equation}
\mathcal{L}_{\text{URCC-reward}}(\theta) = \mathbb{E}_{(\pi^+, \pi^-) \sim \mathcal{P}} \left[ \log\left(1 + \exp\left( - (S_{\pi^+} - S_{\pi^-}) \right) \right) \right]
\end{equation}

Here, $\mathcal{P}$ denotes the set of sampled permutation pairs with preference orderings induced by the reward model. Our implementation uses neighborhood-based sampling (e.g., pairwise swaps) to construct $\pi^+$ and $\pi^-$ from the base ranking.

Thus, while our training procedure is structurally consistent with the original URCC framework, we employ a more expressive Transformer-based reward model to capture user behavior better and align optimization with utility-oriented objectives.

\subsection{Comparison of Time Complexity and Counterfactual Space Exploration}

Table \ref{tab:apndx_time} compares the time complexity of three methods: URCC$^*$ , PG-rank$^*$ , and \ouralgo. The per-iteration time complexity is analyzed based on the number of calls to the reward model.
\begin{itemize}
    \item \textbf{URCC$^*$ :} \( n^2 \), where \( n \) is the number of items in the list. URCC$^*$  explores the neighborhood of factual permutations, leading to quadratic complexity due to pairwise comparisons. URCC$^*$  only explores the neighborhood of factual permutations, meaning it performs limited counterfactual exploration. This approach is considered pessimistic because it does not explore the entire space of possible rankings, which could miss potentially better arrangements.
    \item \textbf{PG-rank$^*$ :} \( k \), where \( k \) is the number of Monte Carlo (MC) samples. While \( k \) is typically smaller than \( n \), PG-rank$^*$  requires large \( k \) values and variance reduction baselines to converge. PG-Rank uses Monte Carlo (MC) sampling to explore a broader counterfactual space, but this approach requires large MC samples to converge effectively. To ensure stable and accurate exploration, PG-Rank relies on variance-reduction baselines. However, it still faces challenges in accurately capturing all potential counterfactual configurations without a very large number of samples.
    \item \textbf{\ouralgo:} \( 1 \), as it performs a single call to the reward model. RewardRank explores the entire counterfactual space efficiently and can focus on more certain regions with reward misspecification mitigation.
\end{itemize}

In Table \ref{tab:apndx_time}, we also provide the overall wall-clock time to train the model under the above method for the Baidu-ULTR dataset. Each model is trained for 21 epochs.

\begin{table}[!t]
\centering
\caption{\textbf{Comparison of Time Complexity} for URCC$^*$ , PG-rank$^*$ , and \ouralgo in term of number of calls to the reward model per iteration on Baidu-ULTR dataset.}
\label{tab:time_complexity}
\begin{tabular}{lccc}
\toprule
\textbf{Method} & \textbf{Time Complexity} & \textbf{Wall-Clock Time} & \textbf{Description} \\
\midrule
PiRank & $1$ & $\sim $6 hours & No call to the reward model \\
URCC$^*$  & $n^2$ & $\sim $34 hours & Neighborhood search, pessimistic \\
PG-rank$^*$  & $k$ & $\sim $16 hours ($k=10$) & Needs large $k$ for convergence \\
RewardRank & $1$ & $\sim $7 hours & Full counterfactual space exploration \\
\bottomrule
\end{tabular}
\label{tab:apndx_time}
\end{table}

\section{Experimentation details}
\label{sec:apndx_exps}
\subsection{Datasets}
\label{sec:apndx_dataset}
\paragraph{Baidu-ULTR Reranking Dataset.}

\begin{table}[!b]
\centering
\caption{\textbf{Statistics of the Baidu-ULTR reranking dataset} \citep{hager2024unbiased_sigir_ultr}.}
\label{tab:baidu_stats}
\begin{tabular}{lcc}
\toprule
\textbf{Split} & \textbf{\#Query Groups} & \textbf{\#Query-Document Pairs} \\
\midrule
Training       & 1{,}857{,}248        & 11{,}738{,}489 \\
Validation/Test & 590{,}612           & 4{,}797{,}378 \\
\midrule
\textbf{Total} & 2{,}447{,}860        & 16{,}535{,}867 \\
\bottomrule
\end{tabular}
\label{tab:baidu_sigir}
\end{table}
The Baidu-ULTR dataset \citep{hager2024unbiased_sigir_ultr}, a large-scale subset of the Baidu-ULTR corpus \citep{zou2022large_baidu_ultr}, contains user click interactions over web search queries. It includes 1.8M query groups (11.7M query-document pairs) and  590K validation/test sessions (4.8M pairs).%
The authors of \citep{hager2024unbiased_sigir_ultr} provide BERT-based CLS embeddings for each query-document pair.

We use the large-scale reranking dataset introduced by \citep{hager2024unbiased_sigir_ultr}: publicly available at: \url{https://huggingface.co/datasets/philipphager/baidu-ultr_uva-mlm-ctr}, derived from the original Baidu-ULTR corpus \citep{zou2022large_baidu_ultr}. This dataset is constructed from real-world user interactions on Baidu’s production search engine and is designed to support robust evaluation of learning-to-rank models in counterfactual settings.

Each session consists of a user query, a candidate list of documents retrieved by an upstream ranker, the original presented ranking, and user interaction logs (e.g., clicks and dwell time). For each query-document pair, the dataset provides both sparse lexical features (e.g., BM25, TF-IDF, query likelihood) and dense semantic representations.

To generate the dense features, the authors pretrain a BERT-style model, referred to as MonoBERT, from scratch using masked language modeling (MLM) on the full Baidu corpus. This model is trained in a mono-encoder configuration and outputs a [CLS] token embedding for each query-document pair. These CLS embeddings are included in the dataset and serve as fixed, high-quality dense features for downstream reranking. The pretrained MonoBERT model and inference code are publicly available at:  
\url{https://github.com/philipphager/baidu-bert-model}.

\paragraph{Amazon KDD-cup.}The KDD-Cup dataset~\citep{reddy2022shopping_kddcup} contains 130K queries and 2.6M annotated query-product pairs in English, Japanese, and Spanish. Each query is linked to up to 40 products with rich textual metadata (titles, descriptions, bullet points), making it well-suited for LLM-based evaluation, unlike Baidu-ULTR. Although the presentation order is not recorded, the dataset primarily consists of relevant query-product pairs that were shown to users. For training, validation, and testing, we sample five random permutations of length 8 per query, resulting in 400,000 training and 50,000 validation/test groups.
We use the English subset of the product search dataset released as part of the KDD Cup 2022 challenge \citep{reddy2022shopping_kddcup}, which contains real-world queries and associated candidate products from Amazon. Each query-product pair is annotated using the ESCI labeling scheme:
\textbf{E}xact match, \textbf{S}ubstitute, \textbf{C}omplement, or \textbf{I}rrelevant.

Each query group is identified by a unique \texttt{query\_id} and paired with 10--40 product candidates. For each product, the dataset provides structured metadata including:
\begin{itemize}
     \item{product\_title}, 
     \item{product\_brand}, 
     \item{product\_color}
     \item{product\_description}, 
     \item{product\_bullet\_point} (optional fields)
     \item{product\_id}, 
     \item{product\_locale}, and
     \item ESCI relevance label
\end{itemize}

To construct our training and evaluation sets, we sample 5 random permutations of length 8 from each query group. Note that we do not use the human-annotated ESCI labels provided in the dataset. Instead, we leverage the LLM's capability for contextual understanding to generate relevance labels automatically. Ideally, the relevance judgments produced by the LLM should align closely with those of human annotators.
 This yields approximately $392 \text{K}$ query groups for training and $20 \text{K}$  for validation, and $20 \text{K}$ for testing. For a given query group, we encode each query-item pair into sentence embeddings using the \texttt{all-MiniLM-L6-v2}: \url{https://huggingface.co/sentence-transformers/all-MiniLM-L6-v2} model from Sentence Transformers. The input format for the sentence transformer is constructed as:
\begin{quote}
\texttt{\{query\} [SEP] \{product\_title\} Brand:\{brand\} Color:\{color\}}
\end{quote}

\begin{table}[h]
\centering
\caption{\textbf{Statistics of the Amazon KDD Cup (ESCI) dataset (English subset).}}
\label{tab:amazon_stats}
\begin{tabular}{lcc}
\toprule
\textbf{Split} & \textbf{\#Query groups} & \textbf{\#Query-Product Pairs} \\
\midrule
Training       & 78{,}447   & 627{,}576 \\
Validation     & 4{,}000   & 32{,}000 \\
Test           & 4{,}000   & 32{,}000 \\
\midrule
\textbf{Total} & 86{,}447  & 691{,}576 \\
\textbf{Total (including 5 random permutations)} & 392{,}235  & 3{,}137{,}880 \\
\bottomrule
\end{tabular}
\end{table}

\subsection{Implementation details}
\label{sec:apndx_implemen}
We use a transformer architecture for both the reward model and the ranker across all methods to ensure a consistent architectural backbone. The model contains 12 transformer layers, 768 hidden dimensions, 12 attention heads, and approximately 110M parameters. All models are trained with a learning rate of $2\times10^{-5}$ using the AdamW optimizer \citep{adamw} with a weight decay of $10^{-2}$. We use a batch size of 512 and train for 21 epochs, applying a learning rate decay at epoch 12 via a step-based learning rate scheduler. All experiments are conducted using 2 NVIDIA A100 GPUs (40GB each).

For our method, \ouralgo, we use a soft permutation temperature $\tau = 0.5$ and reward correction term $\lambda=0.7$. In the PG-rank$^*$ baseline, which replaces the handcrafted NDCG utility with our learned reward model, we apply Gumbel-Softmax sampling with temperature $0.1$ to approximate permutation sampling from the Plackett–Luce distribution. We report PG-rank$^*$ results for different Monte Carlo samples ($\text{MC} = 1, 5, 10$) to evaluate variance in reward estimation.

In our URCC$^*$  implementation, we follow the original two-stage design: a set-aware utility model and a pairwise ranker. The utility model is trained with a binary cross-entropy loss computed over per-item logits derived from the transformer encoder outputs. Specifically, for each item in the permutation, we pool its embedding from the encoder, apply dropout, and project it through a shared per-item classifier. The per-item predictions are matched to click labels, and their aggregated loss forms the utility supervision.

As an additional baseline, we include a Naive-ranker trained with a relaxed NDCG objective following the PiRank formulation \citep{swezey2021pirank}, allowing listwise supervision using soft permutation matrices. All baselines are trained using the same reward data and input embeddings to isolate the impact of the learning objective.

Representative code for our implementations of \ouralgo, PG-rank$^*$ , URCC$^*$  baselines, and evaluation procedures is included in the supplementary material.

\section{Counterfactual evaluation protocols}
\label{sec:apndx_eval}
\subsection{PO-Eval details}
\label{sec:apndx_po_eval}

PO-Eval provides a click-based framework for counterfactual evaluation of ranking models. 
Using the pre-trained Inverse Propensity Scoring model (IPS-Oracle) 
\citep{hager2024unbiased_sigir_ultr}\footnote{\url{https://github.com/philipphager/baidu-bert-model}} 
on the Baidu-ULTR dataset, it generates soft click probabilities for items in a ranked query group. 
These probabilities serve as counterfactual labels, enabling the evaluation of how effectively a ranker can model user engagement patterns reflected in clicks.

\begin{figure}[b]
    \centering
    \begin{subfigure}[t]{0.46\textwidth}
        \centering
        \includegraphics[width=\textwidth]{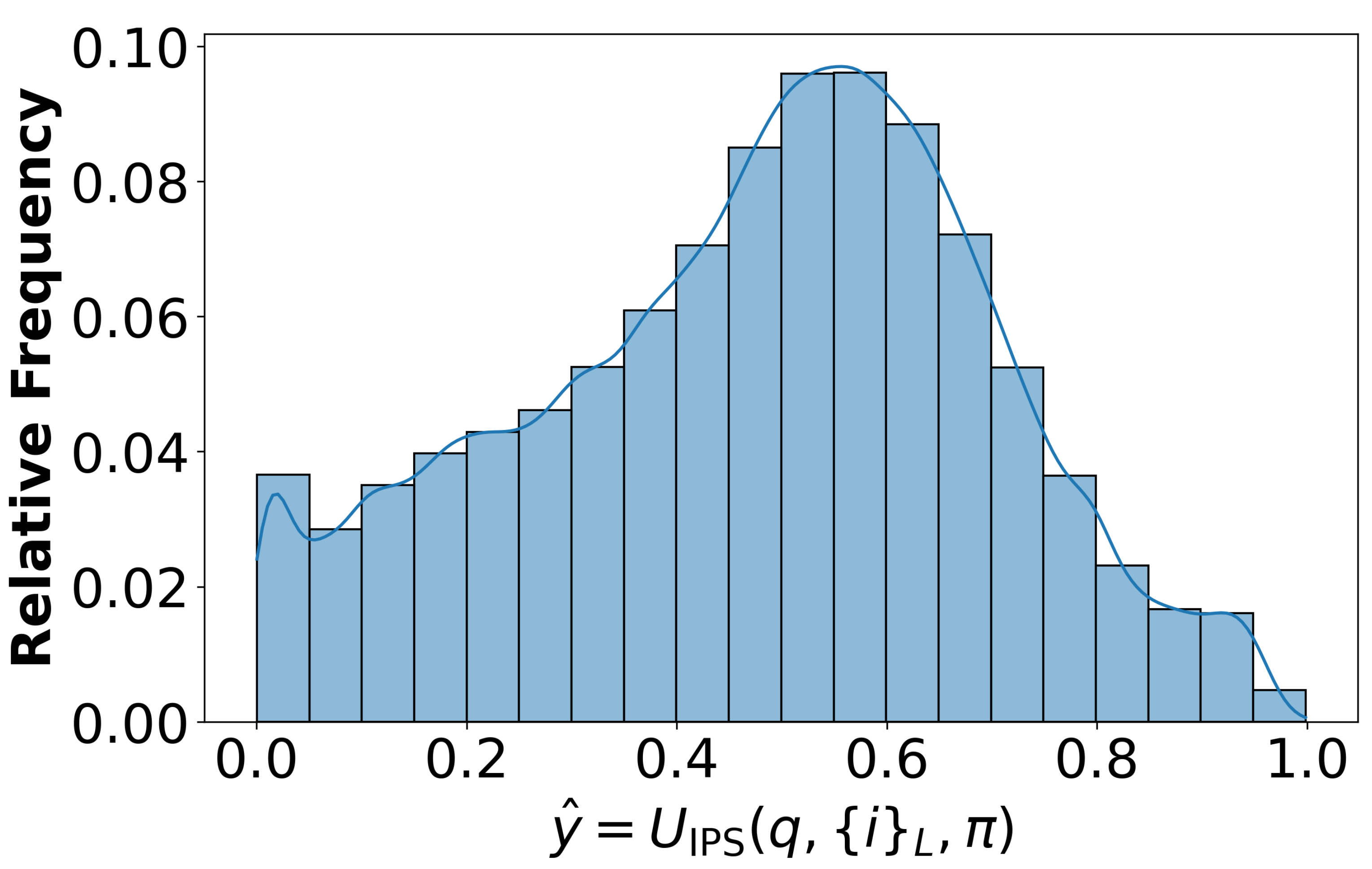}
        \caption{Soft-utility distribution on Baidu-ULTR generated by IPS-Oracle computed using Eqn~\ref{eqn:ips_utility_soft}.}
        \label{fig:ips_y}
    \end{subfigure}
    \hfill
    \begin{subfigure}[t]{0.48\textwidth}
        \centering
        \includegraphics[width=\textwidth]{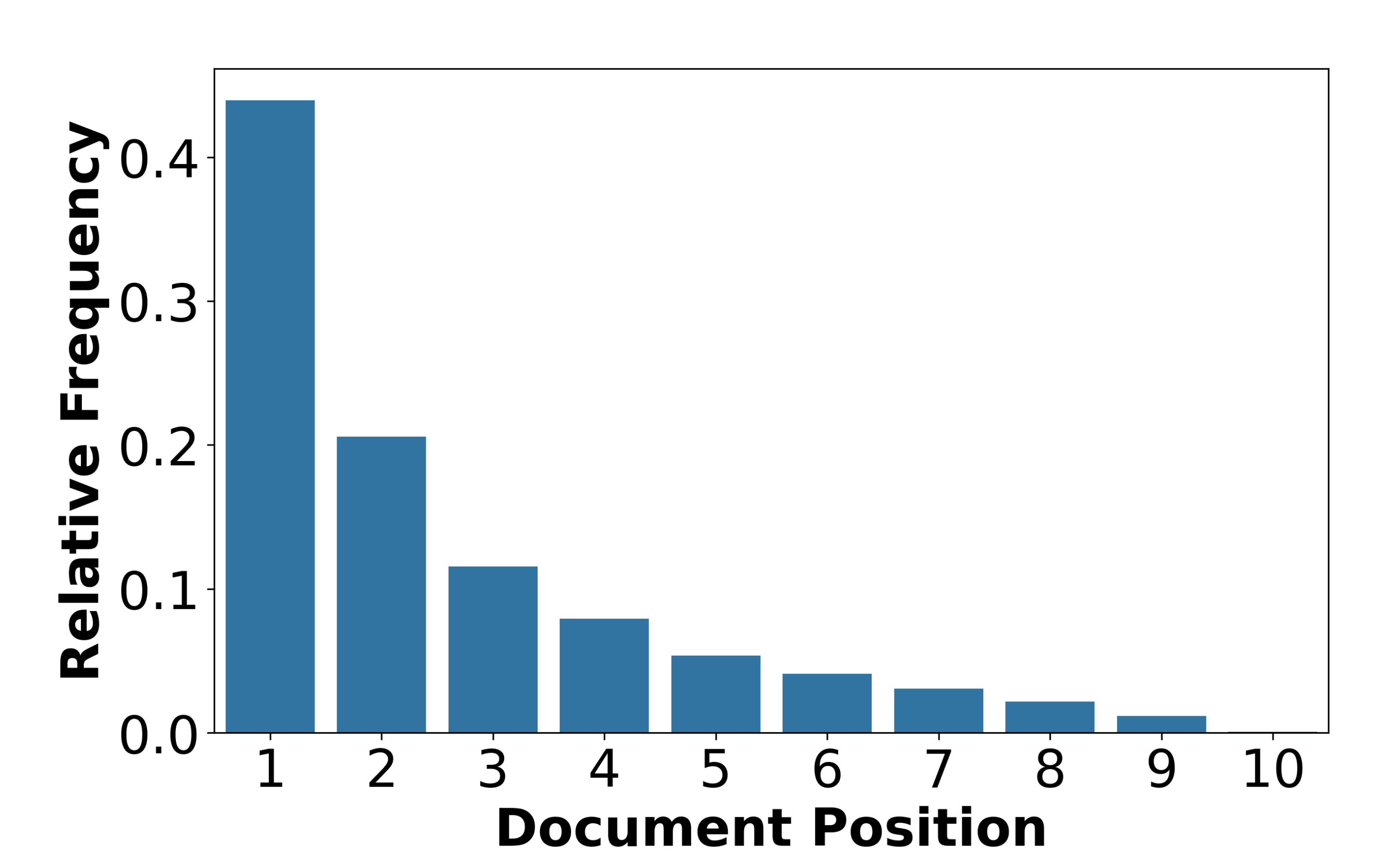}
        \caption{Position distribution of clicks in Baidu-ULTR.}
        \label{fig:ips_pos}
    \end{subfigure}
    \caption{Distributions extracted from IPS-Oracle analysis on Baidu-ULTR.}
    \label{fig:ips_combined}
\end{figure}

As the Baidu-ULTR dataset is derived from user interaction logs, click activity is heavily concentrated in the top-ranked positions, reflecting strong position bias (see Figure~\ref{fig:ips_pos}). In contrast, the distilled soft utility (\( \hat{y} \)) generated by the IPS-Oracle exhibits a more uniform distribution across positions (Figure~\ref{fig:ips_y}), indicating that the oracle has successfully learned to correct for position bias. Under the PO-Eval protocol, ranking methods aim to implicitly learn position debiasing from the IPS-Oracle's soft utility, as indicated by high \( U_{\text{IPS-O}}(q, \{i\}_L, \hat{\pi}) \).

\commentout{
The IPS estimator corrects for exposure bias by reweighting the observed clicks using the inverse of the position-level propensity. Given logged data $\{(q_n, \{i\}_L^n, \pi_n^f, k_n, y_n)\}_{n=1}^N$, where $y_n \in \{0,1\}$ is the click label for the item shown at position $k_n$ under $\pi_0$, and $s_n$ is the score assigned to this item by the new policy $\pi_\theta$, the IPS loss is:
\begin{align}
\mathcal{L}_{\text{IPS}} = \frac{1}{N} \sum_{n=1}^{N} \frac{1}{P(E_{k_n})} \cdot \ell(y_n, s_n) \,,
\end{align}
where $\ell$ is a pointwise loss function such as binary cross-entropy. This estimator assumes that position propensities $P(E_k)$ are either known or estimated, and corrects for position bias in logged interactions to yield an unbiased estimate of the new policy's expected performance.
}

\paragraph{Training and evaluating ranking schemes.}
Using the learned reward model, any ranker \( f \) can be optimized via the reward maximization objective defined in Eqn~\ref{eq:ranker_loss}. To evaluate its performance under the IPS-Oracle, we define the following metric:  
Given a query group \( (q, \{i\}_L) \) and predicted relevance scores \( s = [s_1, \dots, s_L] \), the induced permutation is \( \hat{\pi} = \text{argsort}(s) \). For each position \( \hat{\pi}_\ell \), the examination probability is \( P(E_{\hat{\pi}_\ell}) \), and the associated relevance score \( R_{q, i_\ell} \) is provided by the IPS-Oracle. The overall utility is computed as the probability of at least one click: $U_{\text{IPS}}(q, \{i\}_L, \hat{\pi})$, which serves as the primary evaluation metric.  It reflects how well \( f \) aligns with the user behavior modeled by the IPS-Oracle; higher values indicating better alignment. Additionally, we report \( \text{NDCG}_{\text{rel}}@10 \), which measures how much the predicted ranking respects the relevance scores \( R_{q, i} \).

We incorporate the examination probabilities from \citep{hager2024unbiased_sigir_ultr}, which are defined as: 
\[
P(E) = \{ 1: 1.0000, 2: 0.6738, 3: 0.4145, 4: 0.2932, 5: 0.2079, 6: 0.1714, 7: 0.1363, 8: 0.1166 \}
\]

\begin{figure}[!t]
    \centering
    \includegraphics[scale=0.08]{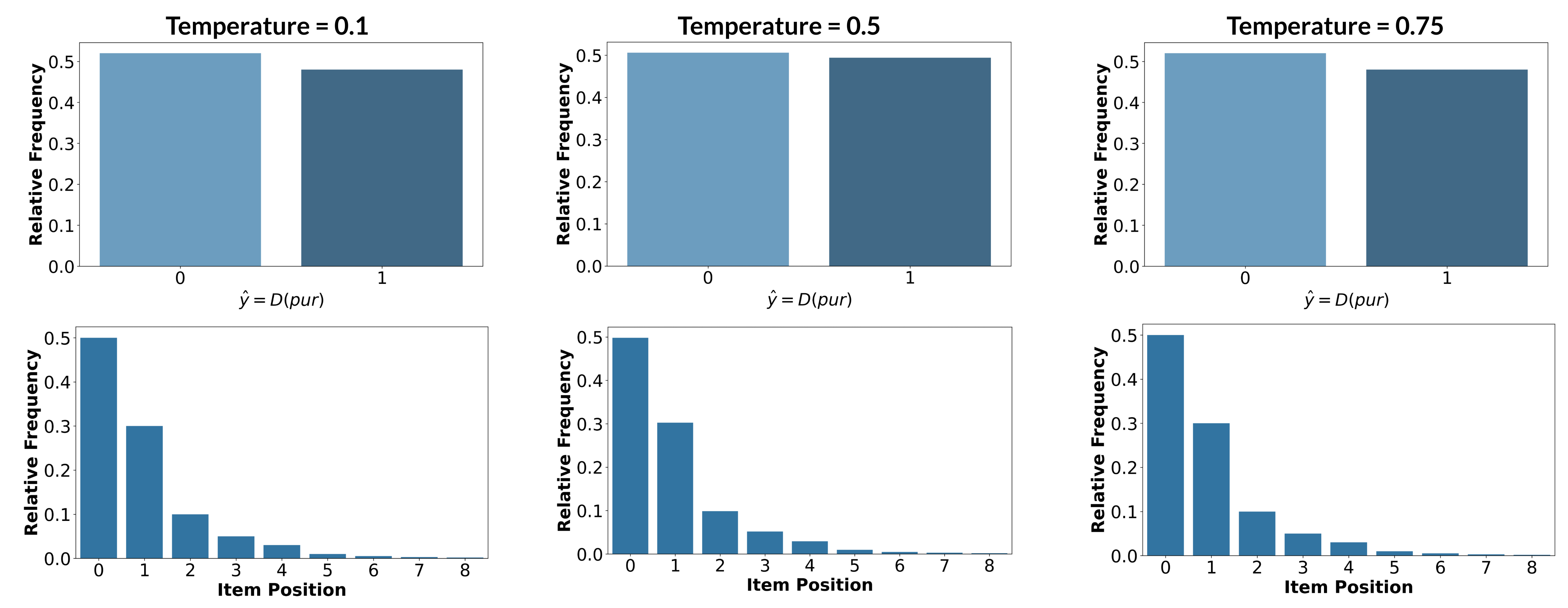}
\caption{\textbf{Effect of Sampling Temperature on LLM-Simulated Behavior in LAU-Eval.} 
We visualize the distribution of binary purchase decisions (top) and item positions (bottom) generated by Claude Sonnet 3.5 v2 under three sampling temperatures: 0.1, 0.5, and 0.75. Each sample corresponds to a ranked list generated during LAU-Eval. As temperature increases, the purchase signal slightly diversifies, while positional biases remain consistent across settings. These results suggest that LAU-Eval is robust to moderate sampling variability, with LLMs producing stable user-like behavior under soft prompting.}
    \label{fig:llm_sampling}
\end{figure}

\subsection{LAU-Eval details}
\label{sec:apndx_lau_eval}
We use Claude 3.5 Sonnet v2 with a temperature of 0.5 and a context window of 5{,}000 tokens. The LLM is prompted using a consistent instruction template, as illustrated in Figure~\ref{fig:llm_inst1}. To evaluate a ranker with LAU-Eval, its predicted scores are converted into item positions, which are then used to reorder the input list. This reordered list is passed to the LLM alongside the original query, and the LLM outputs a binary decision regarding purchase. We include representative query groups and the corresponding LLM responses to demonstrate this pipeline.

To assess the robustness of LAU-Eval under different sampling conditions, we examine how varying the temperature of the LLM decoding process affects its outputs. Figure~\ref{fig:llm_sampling} shows the distributions of LLM-simulated purchase decisions and selected item positions at temperatures 0.1, 0.5, and 0.75. While purchase rates exhibit slight variation, the LLM consistently favors top-ranked items—reflecting realistic user behavior in shopping scenarios.

\paragraph{Instruction prompt for LLM.}We design the LLM-Eval instruction to incorporate behavioral biases such as position bias, brand preference, irrelevance filtering, similarity aversion, and color bias, guiding the LLM to consider both relevance and context-dependent preferences. Given a query and an ordered product list, the LLM estimates (i) the probability of purchasing at least one item and (ii) the selected item, without explicit relevance constraints. We illustrate the instruction prompt using an example from the Amazon KDD-Cup dataset \citep{reddy2022shopping_kddcup}, as shown in Figure \ref{fig:llm_inst1}. 

\paragraph{Ranking Evaluations.}We present the LLM’s response to the initial list in Figure \ref{fig:llm_res1}, including the full reasoning behind the response. It is noteworthy how the LLM is able to reason about the biases present in the query groups effectively. For each initial list, we also show the LLM’s response to the rearranged list generated by Claude, depicted in Figure \ref{fig:rank_res1}. As seen, the initial arrangements in Figure \ref{fig:llm_res1} lead to a no purchase decision, whereas \ouralgo generates arrangements that increase the likelihood of purchase according to the LLM. Furthermore, the LLM's response enhances the interpretability of LLM-Eval, demonstrating how \ouralgo's ranking capabilities align with the LLM’s reasoning process.

\begin{table}[t]
\centering
\caption{\textbf{Counterfactual vs.\ surrogate evaluation of ranking methods (LAU-Eval only).} We report performance on the true counterfactual utility (Eq.~\ref{eq:cltr_utility}) and offline/surrogate metrics (Eq.~\ref{eq:ltr_utility}). While most methods score highly on surrogate metrics, these gains often fail to align with true user utility.}
\label{tab:combined}
\resizebox{0.95\textwidth}{!}{
\begin{tabular}{@{}lccc@{}}
\toprule
& \multicolumn{3}{c}{\textbf{LAU-Eval}}\\
& \emph{Counterfactual ({\color{green}\cmark})} & \multicolumn{2}{c}{\emph{Offline ({\color{red}\xmark})}} \\
\cmidrule(lr){2-2}\cmidrule(lr){3-4}
\textbf{Method} & $\mathrm{Pr}(\#\mathrm{Purchases} \geq 1)$ & $\text{NDCG}_{\mathrm{purchase}}$ & $\text{NDCG}_{\mathrm{ESCI}}$ \\
\midrule
Policy in data & 0.497 $\pm$ 0.009 & $0.496 \pm 0.009$ & $0.995 \pm 0.009$ \\
\midrule
ListNet \citep{cao2007learning_listnet}     & $0.521 \pm 0.009$ & $0.405 \pm 0.009$ & $0.8611 \pm 0.009$ \\
ListMLE \citep{xia2008listwise}             & $0.522 \pm 0.008$ & $0.402 \pm 0.008$ & $0.8610 \pm 0.003$ \\
LambdaRank \citep{wang2018lambdaloss}       & $0.523 \pm 0.009$ & $0.406 \pm 0.009$ & $0.8610 \pm 0.009$ \\
PiRank \citep{swezey2021pirank}             & $0.528 \pm 0.007$ & $0.408 \pm 0.009$ & $0.8623 \pm 0.005$ \\
\midrule
URCC$^*$ \citep{urcc}                       & $0.471 \pm 0.008$ & $0.401 \pm 0.007$ & $0.8621 \pm 0.009$ \\
PG-rank$^*$ \citep{pgrank}                  & $0.489 \pm 0.007$ & $0.402 \pm 0.008$ & $0.8630 \pm 0.009$ \\
\ouralgo                                    & $0.561 \pm 0.008$ & $0.401 \pm 0.007$ & $0.8628 \pm 0.009$ \\
\bottomrule
\end{tabular}
}
\end{table}

Initially, we experimented with smaller language models such as Llama-3.1-8B: \url{meta-llama/Llama-3.1-8B-Instruct} and DeepSeek-R1-Distill: \url{deepseek-ai/DeepSeek-R1-Distill-Llama-8B}. However, these models were unable to generate appropriate responses to the instructions. Our experiments revealed that larger models were better at understanding the context.

It is important to note that LAU-Eval is used to simulate user behavior dynamics that may influence user decisions. Our selection of biases and instruction prompt serves as a proof-of-concept demonstrating that an LLM can be used as a proxy user to study counterfactual ranking strategies. We acknowledge that there are likely many variants of instruction prompts that could be designed to simulate user behavior. This area of exploration could be a direction for future work.

\clearpage

\begin{figure}[ht!]
\centering
\begin{llm_input}
You are shopping for a given query. Your task is to estimate the likelihood of purchasing any item in a provided list. Please answer yes or no, indicating whether you wish to purchase any item from the given list. Consider the relative relevance of items in the list when making your decisions. Be frugal, as a typical human user would be—most users buy when the list is highly relevant, and often make no purchase when following behavioral criteria are not met. You enter a 'query' into the shopping system, and it returns some items mentioned in the 'products'. The items are presented in the given order, with 1st item shown at the top of the list and the last item shown at the bottom. \\
Your query-products shopping list: 

\textbf{Query:}
\texttt{11 iphone pro screen protector}

\textbf{Products:}
\begin{footnotesize}
\begin{verbatim}
{"B09CGJ8RW1":"title":"JETech Screen Protector and Camera Lens",
             "brand":"JETech","color": "Transparent",
"B07515P7PT":"title":"JETech Screen Protector for iPhone 11 Pro",
             "brand":"JETech","color": "Clear",
"B075S8V728":"title":"Ailun for Apple iPhone 11 Pro/iPhone",
             "brand":"Ailun","color":NA,
"B07STC633H":"title":"UNBREAKcable Screen Protector for iPhone 11",
             "brand":"UNBREAKcable","color":NA,
"B07D6XR7FM":"title":"TETHYS Glass Screen Protector for iPhone 11",
             "brand":"TETHYS","color":"Transparent",
"B073DLZWX7":"title":"Maxboost Screen Protector for Apple iPhone",
             "brand":"Maxboost","color":"Clear",
"B07FP41MC5":"title":"Trianium (3 Packs) Screen Protector",
             brand:"Trianium",color:"Clear",
"B09BQRWG15":"title":"YRMJK Screen Protector Compatible iPhone",
             brand:"YRMJK",color: NA}
\end{verbatim}
\end{footnotesize}
\textbf{Relevance Score}: The relevance score shows how relevant the item is given the query. For every query-item pair, it is a numerical value between 0 and 1. 
You should consider the following criteria:

1. \textbf{Position bias}: where the items appearing near the top are more likely to be clicked.  
   The position score decreases based on the following examination probabilities:
   position\_scores = \{
     1: 1.0000,  
     2: 0.6738,  
     3: 0.4145,  
     4: 0.2932,  
     5: 0.2079,  
     6: 0.1714,  
     7: 0.1363,  
     8: 0.1166  
   \}  
   If the relevant item is not near the top, it will reduce the probability of purchase irrespective of its relevance.

2. \textbf{Brand bias}: If items from the same brand are placed adjacent to each other, the user is less likely to make a purchase. High brand bias means adjacent items are from the same brand.

3. \textbf{Irrelevance bias}: Multiple irrelevant items near the top reduce the chance of purchasing any item. This measures contextual dissimilarity at top positions.

4. \textbf{Color bias}: A lack of visual diversity (e.g., similar-colored items grouped together) discourages purchases.

Note that high brand-bias, irrelevance-bias, or color-bias harms the user experience and should lower the chance of purchasing. Act like a frugal user: only purchase if all criteria are satisfied; otherwise, do not purchase.\\

\textbf{Task}:  Given a query and its candidate product list, decide whether any item should be purchased and report the decision (no exact totals needed). Use a strict policy: purchase only if all criteria are satisfied; otherwise, output no purchase. \textbf{Decision rule}: Determine the Final decision (yes/no) by jointly considering relevance score, position bias, brand bias, irrelevance bias, and color bias. Proceed to purchase only if all criteria are satisfactorily met. If not, do not purchase. Be especially cautious about brand and color biases—purchase only when the biases are very low. If the decision is yes, select exactly one item to purchase.  

\textbf{Output}: D(purchase) = <answer> \\  
If decision = yes, then Item to be purchased = <answer>
\end{llm_input}
\caption{\textbf{LLM instructions for a sample taken from the Amazon KDD-Cup dataset \citep{reddy2022shopping_kddcup}}. A total of 8 products are displayed for the query "\textit{11 iphone pro screen protector}". For visual clarity, we have shortened the titles and product descriptions.}
\label{fig:llm_inst1}
\end{figure}

\clearpage

\begin{figure}[ht!]
\centering
\begin{llm_response}
Let me analyze this shopping scenario step by step:\\
1. \textbf{Relevance Analysis}:- Most products are screen protectors for iPhone 11 Pro/X/XS (5.8-inch), which is highly relevant to the query - Top items (JETech, Ailun) are particularly relevant as they specifically mention iPhone 11 Pro. 
\\
\textbf{Position Bias}:- First few positions have higher probability of being selected- Top 3 positions (JETech, JETech, Ailun) have the highest position scores.
\\
\textbf{Brand Bias}:- First two positions are JETech brand (high brand bias at top)- This creates a negative shopping experience- Having same brand adjacent to each other reduces purchase likelihood. \\ 
\textbf{Irrelevance Bias}:- Most items are relevant to the query- However, some products like Trianium (4th position) is for iPhone 11 Pro Max (6.5"), which is incorrect size- Overall irrelevance bias is low as most products match the query. \\
\textbf{Color Bias}:- Many items are listed as "Clear" or "Transparent"- High color similarity among products- This creates a moderate color bias.\\
\textbf{Final Analysis}: Despite high relevance of products, there are several concerns: 1. High brand bias at top positions (JETech repeated)2. High color bias (many transparent/clear options)3. The position bias favors top items, but those have brand repetition 4. Some size mismatches in the list. Given these factors, especially the high brand bias at top positions and color similarity, a frugal user would likely be hesitant to make a purchase from this list.\\

\textbf{D(purchase)} = no; The brand repetition at top positions and lack of diversity in colors significantly harm the shopping experience, making it less likely for a careful shopper to make a purchase, even though the products themselves are relevant to the query.

\end{llm_response}
\caption{\textbf{LLM response to the initial list of items shown in Figure \ref{fig:llm_inst1}}.}
\label{fig:llm_res1}
\end{figure}

\begin{figure}[ht!]
\centering
\begin{llm_response_ranker}
Let me analyze this shopping scenario carefully:\\
1. \textbf{Relevance Analysis:}- All products are screen protectors for iPhone 11 Pro/X/XS, which matches the query perfectly- Most products offer multiple packs (2-3 pieces)- All are tempered glass protectors- High relevance across items. \\
2. \textbf{Position Bias}:- Top positions have higher visibility- First three items (UNBREAKcable, TETHYS, Ailun) have strongest position advantage.\\
3. \textbf{Brand Bias}:- JETech appears twice (positions 5 and 6) - creates brand bias- Other brands are well distributed- Moderate brand bias due to JETech repetition. \\
4. \textbf{Irrelevance Bias}:- Low irrelevance bias as all products are highly relevant to query- All products are specifically designed for the requested phone model- Core features (tempered glass, screen protection) are consistent.\\
5. \textbf{Color Bias}:- Most items are clear/transparent- Very similar color schemes across products- High color bias due to lack of variety.\\
\textbf{Final Decision Analysis}:- While relevance is high and position bias favors top items- The high color bias (all transparent) and moderate brand bias (JETech repetition) are concerning- However, given this is a screen protector where transparency is expected/desired, color bias should be weighted less heavily- The first three positions offer different brands and good specifications- TETHYS (2nd position) offers 3-pack with edge-to-edge coverage and installation frame.\\

\textbf{D(purchase)} = yes; Item to be purchased = TETHYS Glass Screen Protector (B07D6XR7FM)Reasoning: Selected TETHYS due to good position (2nd), differentiated brand, comprehensive features, and 3-pack value proposition. The high relevance and acceptable biases make this a reasonable purchase despite some concerns.

\end{llm_response_ranker}
\caption{\textbf{LLM response to the ranked list generated by \ouralgo}.} %
\label{fig:rank_res1}
\end{figure}

\clearpage

\begin{table*}[t!]
\captionsetup{font=small}
\centering
\setlength{\tabcolsep}{6pt}
\caption{\textbf{Ablation studies for counterfactual evaluation of LTR methods.} This table presents ablation results for RewardRank under different configurations, including variations in SoftSort temperature $\tau$, misspecification correction regularization $\lambda$, and the addition of the auxiliary reward loss term from Eqn~\ref{eq:reward_item_loss}. We also report results for PG-rank$^*$ using different numbers of Monte Carlo samples. The counterfactual evaluation metrics are: $\mathrm{Pr}(\#\mathrm{Clicks} \geq 1)$ for \textit{PO-Eval} and $\mathrm{Pr}(\#\mathrm{Purchase} \geq 1)$ for \textit{LAU-Eval}.}

\label{tab:combined}
\resizebox{0.8\textwidth}{!}{
 \begin{tabular}{@{}l c|c@{}}
    \toprule
    &\textbf{ PO-Eval} & \textbf{LAU-Eval} \\
    \textbf{Method} &
    \textbf{$\mathrm{Pr}(\#\mathrm{Clicks} \geq 1)$} & \textbf{$\mathrm{Pr}(\#\mathrm{Purchase} \geq 1)$} \\
    \midrule
    Upper-Bound & 0.553 & - \\
    \midrule
    ListNet \cite{cao2007learning_listnet} & 0.523 $\pm$ 0.0007 & 0.521 $\pm$ 0.009\\
    ListMLE \cite{xia2008listwise} & 0.522 $\pm$ 0.0007 & 0.522  $\pm$ 0.008 \\
    LambdaRank \cite{wang2018lambdaloss} & 0.524 $\pm$ 0.0007 & 0.523 $\pm$ 0.009 \\
    PiRank \cite{swezey2021pirank} & 0.525 $\pm$ 0.0007 & 0.528 $\pm$ 0.007\\
    \midrule
    URCC$^*$  & 0.462 $\pm$ 0.0005 & 0.471 $\pm$ 0.008 \\
    PG-rank$^*$ (mc=1) & 0.481 $\pm$ 0.0006 & 0.441 $\pm$ 0.006 \\
    PG-rank$^*$ (mc=5) & 0.495 $\pm$ 0.0005 & 0.465 $\pm$ 0.007\\
    PG-rank$^*$ (mc=10) & 0.501 $\pm$ 0.0005 & 0.489 $\pm$ 0.007 \\
    \midrule

    \multicolumn{3}{c}{\textbf{SoftSort Temperature $\tau$}} \\ \\
    RewardRank ($\tau=0.1, \lambda=0.0$) & 0.531 $\pm$ 0.0005 & 0.548 $\pm$ 0.008 \\
    RewardRank ($\tau=0.2, \lambda=0.0$) & 0.532 $\pm$ 0.0005 & 0.550 $\pm$ 0.008 \\
    RewardRank ($\tau=0.5, \lambda=0.0$) & 0.533 $\pm$ 0.0005 & 0.551 $\pm$ 0.007 \\
    RewardRank ($\tau=0.7, \lambda=0.0$) & 0.531 $\pm$ 0.0005 & 0.550 $\pm$ 0.008 \\
    RewardRank ($\tau=1.0, \lambda=0.0$) & 0.530 $\pm$ 0.0005 & 0.549 $\pm$ 0.009 \\
    \midrule

    \multicolumn{3}{c}{\textbf{Misspecification Correction $\lambda$}} \\ \\
    RewardRank ($\tau=0.5, \lambda=0.1$) & 0.532 $\pm$ 0.0005 & 0.549 $\pm$ 0.007 \\
    RewardRank ($\tau=0.5, \lambda=0.3$) & 0.534 $\pm$ 0.0007 & 0.554 $\pm$ 0.007 \\
    RewardRank ($\tau=0.5, \lambda=0.7$) & 0.536 $\pm$ 0.0007 & 0.561 $\pm$ 0.008 \\
    RewardRank ($\tau=0.5, \lambda=1.0$) & 0.533 $\pm$ 0.0007 & 0.553 $\pm$ 0.006 \\
    \midrule

    \multicolumn{3}{c}{\textbf{Auxiliary Per-Item Regularizer Eqn~\ref{eq:reward_item_loss}}} \\ \\
    RewardRank (reward loss = Eqn~\ref{eq:reward_ce_loss}) & 0.528 $\pm$ 0.0005 & 0.553 $\pm$ 0.008 \\
    RewardRank (reward loss = Eqn~\ref{eq:reward_ce_loss} + Eqn~\ref{eq:reward_item_loss}) & 0.536 $\pm$ 0.0005 & 0.561 $\pm$ 0.008 \\
    \midrule
    
    \multicolumn{3}{c}{\textbf{Using pretrained ranker: PiRank}} \\ \\
    URCC$^*$  & 0.521 $\pm$ 0.0005 & - \\
    PG-rank$^*$ & 0.503 $\pm$ 0.0006 & - \\
    RewardRank & 0.538 $\pm$ 0.0005 & - \\
    \bottomrule
 \end{tabular}
}
\end{table*}

\section{Further ablation studies}
\label{sec:apndx_ablation}

We use the Baidu-ULTR dataset to study how the performance of \ouralgo varies with two key hyperparameters: the temperature $\tau$ of the \textit{SoftSort} operator, which controls permutation sharpness, and the regularization strength $\lambda$ for reward misspecification correction introduced in Eqn~\ref{eqn:misspec}. Varying $\tau \in \{0.1, 0.2, 0.7, 1.0\}$ shows that moderate temperature ($\tau = 0.2 - 0.5$) achieves the best utility and relevance alignment. Too low a temperature leads to unstable gradients due to near-hard permutations, while higher values oversmooth rankings, diluting learning signals. Fixing $\tau = 0.5$, we ablate the correction term with $\lambda \in \{0.0, 0.1, 0.3, 0.7, 1.0\}$. As shown in Table~\ref{tab:combined} and visualized in Figure~\ref{fig:misspec}, moderate correction (\(\lambda=0.5-0.7\)) yields the best trade-off, by down-weighting unreliable samples without discarding informative ones. This results in higher IPS utility, confirming the benefit of explicitly mitigating reward misspecification.

We explore the impact of incorporating an auxiliary item-level reward loss (Eqn~\ref{eq:reward_item_loss}) into the training objective of the reward model. As shown in Table~\ref{tab:combined}, adding this auxiliary loss to the list-level cross-entropy objective (Eqn~\ref{eq:reward_ce_loss}) improves expected utility from 0.528 to 0.536. This indicates that learning to predict the per-item feedback as an auxiliary task enhances the reward model’s generalization and improves the downstream utility-optimized ranking.

Table~\ref{tab:combined} presents the results for the pretrained ranker, which is the ranker trained with PiRank \cite{swezey2021pirank} LTR loss. URCC$^*$ , being dependent on the pretrained ranker, demonstrates larger performance improvements. However, the gains from the pre-trained ranker are not as significant, suggesting that URCC$^*$ 's performance is more sensitive to the quality of the pretrained model. On the other hand, \ouralgo and PG-rank$^*$  show limited improvements when using the pretrained ranker, as their performance is not heavily reliant on the presence of a strong pretrained model. These methods are more robust in their ranking capabilities and do not exhibit substantial gains from a pretrained ranker.

\section{Inference cost and limitations}
\label{sec:apndx_limit}
\paragraph{Inference Cost.}The main inference cost in our work arises from using large language models (LLMs) for ranking and purchase probability estimation. These models require significant computational resources, especially for large datasets and permutations of items. Optimizations like batch processing and multi-GPU use help manage costs, but scalability remains a challenge. Caching frequently accessed queries can further reduce repeated computation costs.

\paragraph{Limitations.}While both PO-Eval and LAU-Eval provide valuable insights into ranking quality and user preferences, there are inherent limitations in each approach. These limitations arise from their reliance on specific biases and the quality of input data, which may affect their performance in diverse real-world scenarios. Below, we outline the key limitations of each method:

\begin{itemize}
    \item \textbf{PO-Eval Limitations:} While PO-Eval provides a robust baseline for position-debiasing, it is limited in behavioral scope. It primarily focuses on mitigating position bias without considering other nuanced user preferences, such as brand bias or contextual relevance, which can lead to suboptimal performance in more complex scenarios.
    
    \item \textbf{LAU-Eval Limitations:} LAU-Eval captures richer heuristics and offers more context-aware ranking, but it depends heavily on the quality and stability of the LLM outputs. Inconsistent or noisy outputs from the LLM can negatively affect the reliability of the evaluation, as the method assumes that the LLM accurately reflects user preferences in all scenarios.
\end{itemize}

These limitations highlight areas for future improvement, such as incorporating additional user behavior modeling and enhancing the robustness of the LLM outputs.

\end{document}